\newtheorem{theorem}{Theorem}
\newtheorem{lemma}{Lemma}
\newtheorem{definition}{Definition}
\newtheorem{proposition}{Proposition}
\newtheorem{corollary}{Corollary}
\begin{document}

\title{Efficient approximate unitary $t$-designs from $partially$ $invertible$ $universal$ sets and their application to  quantum speedup.}

\author{Rawad Mezher$^1$  $^2$} 
 
\author{Joe Ghalbouni$^2$}
\author{Joseph Dgheim$^2$}
\author{Damian Markham$^1$}
\affil{
(1)Laboratoire d'Informatique de Paris 6, CNRS, Sorbonne Université, 4 place Jussieu, 75252 Paris Cedex 05, France\\
(2) Laboratoire de Physique Appliquée, Faculty of Sciences 2, Lebanese University, 90656 Fanar, Lebanon \\ \href{rawad.mezher@etu.upmc.fr}{ rawad.mezher@etu.upmc.fr} \\
\href{joe.ghalbouni@ul.edu.lb}{joe.ghalbouni@ul.edu.lb} \\ \href{jdgheim@ul.edu.lb}{jdgheim@ul.edu.lb} \\ \href{damian.markham@lip6.fr}{ damian.markham@lip6.fr}  \\}
 
%
%

\maketitle
\begin{abstract}

 At its core a $t$-design is a method for sampling from a set of unitaries in a way which mimics sampling randomly from the Haar measure on the unitary group, with applications across quantum information processing and physics.

 We construct new families of quantum circuits on $n$-qubits giving rise to $\varepsilon$-approximate unitary $t$-designs efficiently in $O(n^3t^{12})$ depth. 
 These quantum circuits are based on a  relaxation of technical requirements in previous constructions.
 In particular, the construction of circuits which give efficient approximate $t$-designs by Brandao, Harrow and Horodecki \cite{brandao} required choosing gates from ensembles which contained inverses for all elements, and that the entries of the unitaries are algebraic.
 We reduce these requirements, to sets that contain elements without inverses in the set, and non-algebraic entries, which we dub \emph{partially invertible universal} sets.
 
 We then adapt this circuit construction to the framework of measurement based quantum computation (MBQC) and give  new explicit examples of $n$-qubit graph states with fixed assignments of measurements (graph gadgets) giving rise to unitary $t$-designs based on partially invertible universal sets, in a natural way. 
 
 We further show that these graph gadgets demonstrate a quantum speedup, up to standard complexity theoretic conjectures. 
 We provide numerical and analytical evidence that almost any assignment of fixed measurement angles on an $n$-qubit cluster state give efficient $t$-designs and demonstrate a quantum speedup.

\end{abstract}

\newpage
\section{Introduction}



The capacity to randomly choose a unitary operation is a powerful tool in quantum information and physics in general, with applications ranging from randomized benchmarking~\cite{benchmarking}, secure private channels~\cite{HLS+04} to understanding how quantum systems thermalize~\cite{muller}, as well as modeling black holes~\cite{haypres} and recently as providing natural candidates for devices demonstrating a quantum speedup~\cite{harrow2018,juan,juan2,mannbremner}. 
A $t$-design is an ensemble of unitaries and associated probabilities, which, when sampled, mimic choosing a unitary at random according to the Haar measure (the most natural group theoretical definition of random) in a specific sense - they act exactly as the Haar measure up to $t$th order in the statistical moments.
The main interest in $t$-designs lies in the fact that sampling from the Haar measure is known to require exponential resources \cite{knill}, but sampling from $t$-designs can be done efficiently~\cite{dankert,harrow2des,brandao,zhu,harrow2018,peterdamian,mezher,wintonviola,nakata}, whilst maintaining usefulness~\cite{benchmarking,HLS+04,muller,haypres,harrow2018,juan2,mannbremner}.

The prevalent technique for generating a $t$-design is through random circuits, where gates are randomly chosen from some ensemble of small, typically two qubit gates, and put together in a specific way to form a circuit~\cite{harrow2des,brandao,harrow2018,peterdamian,mezher,wintonviola}.
Though essentially any universal set of two qubit gates can be used to generate this ensemble, the precise conditions on this ensemble are somewhat strict (due to technical reasons in the proofs) - they require that each gate has an inverse in the ensemble and that their entries are algebraic.
The former condition is also imposed on universal ensembles when proving the Solovay Kiteav theorem for efficient approximate universality~\cite{solovaykitaev}.
Though usually this is not an issue, it can be, particularly when these sets of unitaries are generated in a restricted manner - for example arising from measurements on graph states~\cite{peterdamian,mezher,juan,juan2}.

Graph states are a family of multipartite quantum states, with simple descriptions in terms of graphs~\cite{graphstate}.
They are very useful resources for quantum information, with applications in measurement based quantum computing~\cite{briegelrauss}, fault tolerance~\cite{RHG07}, cryptographic multiparty protocols~\cite{Secret sharing Markham Sanders 08}, quantum networks~\cite{CDF} and recently for generating $t$-designs~\cite{peterdamian,mezher} and instances of quantum speedup~\cite{juan,juan2}.
They represent the cutting edge in terms of size of entangled states that can be generated and controlled in experiments, with implementations demonstrated in optics~\cite{WL+16,BK+12,BM+14}, \cite{CY+17,YU+13} including on chip~\cite{CO+16}, in ion traps~\cite{BM+11,MS+11}, super conducting qubits~\cite{SX+17} and NV centres~\cite{CK+16}. 
Remarkably, in continuous variable quantum optics graph states of up to $10^4$ parties have been created~\cite{YU+13}.
Furthermore there are several techniques that have been developed to verify the quality of graph states in various settings of trust \cite{juan,damianalexandra,gao,cramer,serfling} which can often be translated into verification of their applications.

Our work connects these different questions and approaches, first by proving a general relaxation of the conditions on a set of ensembles used to generate $t$-design, leading to new constructions for circuits, which we then translate to the graph state, measurement based approach.
We then give explicit examples where the relaxation to partially invertible sets is useful in graph state constructions.
Following along the lines of \cite{juan,juan2} we show that these examples give rise to natural instances of provably hard sampling problems demonstrating quantum speedup.

We now give a bit more background into the three areas of our main results.

\subsection{$t$-designs in partially invertible universal sets}

Exact $t$-designs, where the condition on the $t$th order moments are satisfied exactly (stated precisely in section 2.1),  are only known for a few cases \cite{dankert,zhu,peterdamian}. We are thus often interested in approximate versions, where conditions hold up to some error $\varepsilon$ - we call these $\varepsilon$-approximate $t$-designs \cite{harrow2des,brandao,harrow2018,peterdamian,mezher,wintonviola,nakata}. 
We say a circuit construction is efficient if the size of the circuit, $k$ does not scale exponentially in $n$, $t$ or $1/\varepsilon$. Previous work showed that random $n$-qubit quantum circuits formed of $k$ applications of 2-qubit gates form efficient $\varepsilon$-approximate $t$-designs with $ k=poly(n,t,log(\frac{1}{\varepsilon}))$ \cite{harrow2des,brandao}, where these 2-qubit gates are chosen from the Haar measure on $ U(4)$ \cite{brandao,harrow2des}, or uniformly randomly from a universal \footnote{A set $\mathcal{U} \subset U(N)$ is said to be universal in $U(N)$, when the group generated by $\mathcal{U}$ is dense in $U(N)$.}  set $\mathcal{U_{B} }\subset U(4)$ which contains unitaries and their inverses \footnote{ We mean by this that for every $U \in \mathcal{U_{B}}$, there exists $U_{1} \in \mathcal{U_{B}}$ , such that $U_{1}=U^{\dagger}$.}, and is  made up of unitaries with algebraic entries \cite{brandao}.

The first question we ask here is whether the restriction that $\mathcal{U_{B}}$ contains unitaries and their inverses can be avoided. 
Such a possibility opens up different constructions, which are notably important considering measurement based constructions, where one does not easily have full control over the whole ensemble.
The answer to this question, as we will show, turns out to be positive, provided $(I)$: we can find sets $\mathcal{U_{B}}$  containing a subset $\mathcal{U_{M}} \subset \mathcal{U_{B}}$ formed of unitaries with algebraic entries \cite{brandao}, such that $\mathcal{U_{M}}$ contains unitaries and their inverses, and $(II)$ both $\mathcal{U_{M}}$ and its complement in $\mathcal{U_{B}}$ - denoted by $\mathcal{U_{B/M}}$ - which need not nescessarily contain unitaries and their inverses nor have algebraic entries are universal in $U(4)$.  
For simplicity, we refer to sets $\mathcal{U_{B}}$  verifying $(I)$ and $(II)$ as \textit{partially invertible universal sets}. 

Based on this we derive a construction of $n$-qubit quantum circuits formed of blocks of 2-qubit unitaries chosen uniformly from $any$ partially invertible universal set in $U(4)$, and show that these circuits are efficient $\varepsilon$-approximate $t$-designs in depth $O(n^{3}t^{12})$. In our proofs, we use technical tools such as the Detectibility Lemma \cite{aharonov}, and techniques from \cite{harrow2des,brandao}.

We then adapt this circuit construction to a quantum computing paradigm where partially invertible universal sets arise quite naturally, namely measurement based quantum computation (MBQC) \cite{briegelrauss,raussbriepra}. In MBQC, measuring the non-output qubits of an $n$-qubit graph state at particular angles in the $XY$, $XZ$, or $YZ$ planes of the Bloch sphere, and performing a corrective strategy, for example given by the $g$-flow \cite{elham}, is sufficient to implement any unitary $U \in U(2^{n})$ on the $n$  unmeasured  output qubits. 
Recently it was shown that only $XY$ measurements on cluster states (graph states of the two dimensional square lattice) are sufficient for implementing any $U \in U(2^{n})$ \cite{mantri}. 
On the other hand, performing non-adaptive \footnote{Non-adaptive means with no corrective strategy, non-adaptive measurements can be performed simultaneously.} measurements on graph states effectively implements on the (unmeasured) output qubits unitaries sampled uniformly from an ensemble of random unitaries \cite{peterdamian,mezher}. 

Here, we find examples of small graph states along with measurement angles which generate ensembles of random unitaries which are partially invertible universal sets. 
By concatenating this seed construction in a specific way we generate ensembles with order $O(n^{4}t^{12})$ qubits which form an $\varepsilon$-approximate $t$-design on $U(2^{n})$. 

Translated into the circuit model, these MBQC circuits have a constant depth, since these circuits consist of non-adaptive measurements on a regular graph state, where each qubit is entangled with at most a constant number of neighbors \cite{juan}  \footnote{The total number of qubits - i.e the total number of horizontal lines in the quantum circuit \cite{nielsenchuang}- is $O(n^{4}t^{12})$.}. This observation could be very beneficial from the point of view of experimental implementation.

\subsection{Connection to quantum speedup}
There is currently a tremendous effort being made
to build a quantum computer, and develop quantum
technologies more generally. An important
benchmark for this ambitious project will
be proving a computational advantage over what
can be done with classical computers. Two results
in this direction have sparked a surge in research.
Boson sampling \cite{aaronson,aaronson2} and IQP \cite{iqp} are
subuniversal families of computation which can
be shown to be impossible to replicate efficiently
classically assuming some standard, and strongly
believed, complexity theoretic conjectures hold.
This is often referred to as \emph{quantum speedup}.
Since then, there have been many developments
of these and related models \cite{juan2,juan,gao,mann,bremner,bremnermontanaro,ising,aaronson,aaronson2,iqp,keshari,boxio}  
 to state a few. A review can be found in \cite{harrowmontanaro}. In all of these cases two features
are significant. Firstly, they do not require the
full capabilities of a universal quantum computer
and so are expected to be much simpler to implement,
and second they are all what is known
as sampling problems. That is, the statements
of difficulty are that a classical computer cannot
efficiently sample from the same distribution as
what can be achieved in these quantum architectures
efficiently.

More concretely, the statements run somewhat
as follows. Each of these computational models
is essentially a family of circuits followed by measurements,
the results of which follow a particular
distribution. If it is possible for a classical computer
to efficiently sample from this distribution,
then, certain strongly believed complexity conjectures
would be proved invalid. For proofs which
hold for approximate sampling, the standard conjectures
are of the form \cite{juan,juan2,gao,bremnermontanaro,bremner,mann,iqp} : \\
I) the polynomial hierarchy does not collapse to
the third level \cite{terhal}.\\
II) the average case of the associated problem
(usually $\sharp$P) is also hard ($\sharp$P).\\
III) the quantum circuit families considered output distributions
which are not too peaked - technically known as
anti-concentration \cite{boxio,juan,juan2,bremnermontanaro,bremner,mann}. 

One of the goals of the field now is to reduce
the number of required assumed conjectures, or
justify them, and understand their relationship
to other properties of a given architecture such
as universality. There are many architectures
demonstrating quantum speedup, suited to different
implementations with different versions of the
conjectures which link them in different ways to
different problems. The average case complexity
can be linked to conjectures of average case hardness
of solving certain Ising problems \cite{ising}, or of
Jones polynomials \cite{mann} for example. For several
architectures anticoncentration can be proven explicitly
\cite{juan2,mann}. The work of \cite{juan2,harrow2018,mann} shows an
interesting link between $t$-designs and anticoncentration. 

In this work, and as an application to our $t$-design graph gadgets, 
we use techniques from \cite{juan,juan2} and introduce new families of MBQC architectures showing a quantum speedup. We show that $every$ MBQC $2$-design constructed from partially invertible universal sets is hard to sample from classically, and yet we give an explicit example that can be prepared efficiently using $n$-qubit cluster states with $O(n^{3})$ columns - thereby presenting a quantum speedup. Because our architectures are $t$-designs by construction,  conjecture (III) is proven \cite{juan2,mann}, thus we only require 2 complexity theoretic conjectures in our proofs (namely, Conjectures I) and II) ). Also, because our gadgets have quite a regular structure, they can be translated into a constant depth  quantum circuit as explained above. This makes these architectures desirable for near-term experimental implementation. Finally, our architectures have a natural statement of verification; following from works on graph state verification \cite{juan,damianalexandra,gao,cramer,serfling}. 

\subsection{Families of universal ensembles}

In the final part of this work we explore how common universal ensembles are in the measurement based framework, and how they can be used for $t$-designs. We present two results in this direction, one analytical and the other numerical. 


Analytically, we show that $almost$ $any$ \footnote{Meaning that the set of angles which don't work form a set having zero Lebesgue measure \cite{lebesgue}.} assignment of fixed  $XY$ angle measurements on a $n=2^{\gamma}$-row, 2-column cluster state (where $\gamma$ is an integer) gives a random unitary set $\mathcal{U_{B}}$ which is universal in $U(2^{n})$. We use a Lie algebraic approach outlined by \cite{brylinski}, and observations in \cite{lloyd,barenco} to prove this result. In particular, when $\gamma=1$ we get that almost any assignment of fixed $XY$ angle measurements generates universal sets $\mathcal{U_{B}} \subset U(4)$, which in general can be invertible, partially invertible or non invertible. 

We then provide numerical evidence that for almost any fixed assignment of $XY$ measurements, the subdominant eigenvalue of the operator $M_t[\mu]=\frac{1}{|\mathcal{U_B}|}\sum_{i=1,...\mathcal{|U_B|}}U_{i}^{\otimes t} \otimes U_{i}^{* \otimes t}$  scales efficiently with $t$. \footnote{$U_{i} \in \mathcal{U_{B}}$. $M_t[\mu]$ is usually called the moment superoperator.}  If the numerical result is true, then together with the analytical result on universality, one can show from our techniques developed for the partially invertible case, that cluster state gadgets with almost any fixed $XY$ angle assignment give an efficient $n$-qubit $t$-design under concatenation. Further, the results imply that these gadgets are also hard to sample from classically under concatenation, and thus these gadgets may also be used  as architectures presenting a quantum speedup.

\bigskip

The organization of this paper is as follows. In Section 2 we define some preliminary notions for unitary $t$-designs \cite{dankert}, MBQC \cite{briegelrauss} and Classical simulability \cite{nest,iqp}. In Section 3 we present our main results.  
This begins in Theorem 1 where we present our main result on $t$-designs for partially invertible universal sets of unitaries. The following two corollaries apply this theorem to give general constructions in the circuit model and in the measurement based model given a partially invertible universal set.
We then give an explicit graph state construction generating a partially invertible universal set, which can be used to give explicit graphs with fixed angles and measurements implementing efficient approximate $t$-designs.
This is followed by Theorem~2 where we see that these constructions can be used to generate sampling problems which are provably hard to simulate for a classical computer, assuming standard complexity conjectures.
Finally in Theorem~3 we state our result for the universality of ensembles where measurements angles are fixed, with almost any fixed angles working. We present several examples and numerics suggesting that these also provide efficient $t$-designs.
Section 4 is devoted to the technical proofs. Finally, we draw out conclusions in Section 5.

\section{Preliminary notions}
\subsection{$\varepsilon$-approximate unitary $t$-designs and related concepts}
We start by defining what we mean by a random unitary ensemble.
\begin{definition}
\label{df1}
A random unitary ensemble in $U(N)$  is a couple  $\{p_{i},U_{i} \in \mathcal{U}\}_{i=1,...|\mathcal{U}|}$ (or simply $\{p_{i},U_{i}\}$ for ease of notation), where each unitary  $U_{i} \in \mathcal{U}$ is drawn with probability $p_{i} \geq 0$, and $\sum_{i}p_{i}=1$, with $\mathcal{U} \subset U(N)$.
\end{definition}
We now formalize the definition of $\varepsilon$-approximate unitary $t$-designs (or just $\varepsilon$-approximate $t$-designs for simplicity) which are the main objects of study in this work.
\begin{definition}
\label{df2}
\cite{brandao,nakata} Let $\mathcal{H}$ be the $n$-qubit Hilbert space $(\mathbb{C}^{2})^{\otimes n}$. A random unitary ensemble $\{p_{i},U_{i}\}$ with $U_{i} \in U(2^{n})$ is said to be an $\varepsilon$-approximate $t$-design  if the following holds 

\begin{equation}
\label{eq1}
(1-\varepsilon)\int_{U(2^{n})}U^{\otimes t}\rho U^{\dagger \otimes t}\mu_H(dU) \leq \sum_{i} p_{i} U_{i}^{\otimes t}\rho U_{i}^{\dagger \otimes t} \leq \\ (1+\varepsilon)\int_{U(2^{n})}U^{\otimes t}\rho U^{\dagger \otimes t}\mu_H(dU)
\end{equation}
for all $\rho \in B(\mathcal{H} ^{\otimes t})$, where $\mu_{H}$ denotes the Haar measure on $U(2^n)$. For positive semi-definite matrices $A$ and $B$, $B \leq A$ means $A-B$ is positive semi-definite, $\varepsilon$ and $t$ are positive reals.
\end{definition}
Definition \ref{df2} is sometimes referred to as the $strong$ $definition$ of a $\varepsilon$-approximate $t$-design \cite{brandao,harrow2018}. Note that when $\varepsilon=0$ one recovers the case of exact $t$-designs \cite{dankert,zhu}. Similarly, one can define an approximate $t$-design in terms of various norms, depending on the application in mind \cite{brandao,harrow2018}.
 
 To prove our results, we will study the properties of an operator referred to as the \emph{moment superoperator} $M_t[\mu]$ defined as follows \cite{wintonviola, harrow2des,brandao,mezher}.
\begin{definition}
\label{defmomentsuperop}
For a random unitary ensemble $\{p_{i},U_{i} \in \mathcal{U}\}$, 
 \begin{equation}
\label{eq2}
M_t[\mu]=\sum_{i}p_{i}U_{i}^{\otimes t,t},
\end{equation}
where $\mu$ is the probability measure \footnote{ As shown in \cite{harrow2des} one can shift between a probability distribution over a discrete ensemble $\{p_{i},U_{i}\}$ and a continuous distribution by defining the measure $\mu=\sum_{i}p_{i}\delta_{U_{i}}$.}  over the set $\mathcal{U}$ which results in choosing $U_{i} \in \mathcal{U}$ with probability $p_{i}$, and
 $U^{\otimes t,t}=U^{\otimes t} \otimes U^{* \otimes t}$, and $U^{*}$ is the complex conjugate of $U$.\\
\end{definition}
A useful concept we will frequently make use of is that of an $(\eta,t)$-tensor product expander \cite{hastings,hastings2} (TPE) defined as follows.
\begin{definition}\cite{hastings,hastings2}
\label{deftpe}
A random unitary ensemble $\{p_{i},U_i\}$ is said to be an $(\eta,t)$-TPE if the following holds,
\begin{equation}
\label{eq3}
||M_{t}[\mu]-M_{t}[\mu_{H}]||_{\infty} \leq \eta < 1,
\end{equation}
where $M_t[\mu_H]=\int_{U(2^n)}U^{\otimes t,t} \mu_H(dU)$.
\end{definition}

In particular, we will adopt the usual path \cite{brandao,harrow2des,nakata,mezher,peterdamian} of showing that 
our random unitary ensembles are $(\eta,t)$-TPE's, then use the following proposition to obtain statements about $t$-designs. 
\begin{proposition} \cite{brandao,harrow2des,nakata}
\label{prop1}
If $\{p_{i},U_{i} \in \mathcal{U}\}$ is an $(\eta,t)$-TPE, then the \textbf{k-fold concatenation} of $\{p_{i},U_{i} \}$:  
$\{\prod_{j=1,...k}p_{\pi(j)},\prod_{j=1,...k}U_{\pi(j)}\}$ \footnote{ Note that the random ensemble $\{\prod_{j=1,...k}p_{\pi(j)},\prod_{j=1,...k}U_{\pi(j)}\}$  has a moment super operator $M_t[\mu_k]=(M_t[\mu])^{k}$ \cite{wintonviola}.} is an $\varepsilon$-approximate $t$-design when 
\begin{equation}
\label{eq4}
k \geq \frac{1}{log_{2}(\frac{1}{\eta})}(4nt+log_{2}(\frac{1}{\varepsilon})).
\end{equation}
 Here $\pi$ is a  function acting on $\{1,...,k\}$, resulting in a set $\{\pi(1),...\pi(k)\}$ where $\pi(j) \in \{1,...,|\mathcal{U}| \}$, the $\pi(j)'s$ can be identical. There are $|\mathcal{U}|^k$ such functions $\pi$ and the $k$-fold concatenation includes all of them. \\
\end{proposition}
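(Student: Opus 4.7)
The plan is to exploit the fact that $M_t[\mu_H]$ is an orthogonal projector, which makes $M_t[\mu] - M_t[\mu_H]$ contract sharply under powers, and then translate the resulting operator-norm bound into the PSD sandwiching condition of Definition~\ref{df2}.

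First I would use left/right invariance of the Haar measure to show that $M_t[\mu_H]^2 = M_t[\mu_H]$ and $M_t[\mu_H]^\dagger = M_t[\mu_H]$, so that it is the orthogonal projector onto the commutant of the representation $U \mapsto U^{\otimes t,t}$. The same invariance yields the absorption property $M_t[\mu]\,M_t[\mu_H] = M_t[\mu_H]\,M_t[\mu] = M_t[\mu_H]$ for any random unitary ensemble. Combining these two facts gives the key algebraic identity
\[
\bigl(M_t[\mu]\bigr)^k - M_t[\mu_H] = \bigl(M_t[\mu] - M_t[\mu_H]\bigr)^k,
\]
since $M_t[\mu_H]$ acts as a two-sided identity for $M_t[\mu]$ on its range and both sides annihilate the orthogonal complement. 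Together with submultiplicativity of the operator norm and the TPE hypothesis $\|M_t[\mu] - M_t[\mu_H]\|_\infty \leq \eta$, this yields $\|(M_t[\mu])^k - M_t[\mu_H]\|_\infty \leq \eta^k$. Since the $k$-fold concatenated ensemble has moment superoperator exactly $(M_t[\mu])^k$ (as noted in the footnote to the statement), we obtain the desired spectral bound on the concatenation.

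The last step is to convert this operator-norm bound into the strong PSD inequalities of Definition~\ref{df2}. I would first pass from $\|\cdot\|_\infty$ to a completely bounded norm on superoperators (the diamond norm, or equivalently the $1\to 1$ norm), paying a dimension factor of order $d^{2t}$ with $d = 2^n$. A second factor of $d^{2t}$ arises because the Haar channel $M_t[\mu_H]$ is rank-deficient but its smallest nonzero eigenvalue (as a channel acting on $t$-copy states) is bounded below by $\sim 1/d^{2t}$, so an additive diamond-norm error $\delta$ translates into a multiplicative error of order $\delta \cdot d^{2t}$ in the sandwiching condition. The sufficient condition for an $\varepsilon$-approximate $t$-design thus becomes $\eta^k \cdot 2^{4nt} \leq \varepsilon$, and rearranging gives exactly
\[
k \geq \frac{1}{\log_2(1/\eta)}\bigl(4nt + \log_2(1/\varepsilon)\bigr).
\]

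The main obstacle, I expect, is the careful bookkeeping in this final norm-conversion step: one must handle the rank-deficiency of $M_t[\mu_H]$ so that residual errors come out multiplicative rather than additive, and track the dimension factors without losing tightness. The remaining ingredients — the projector identity, the contraction under powers, and the submultiplicativity argument — are purely algebraic and follow directly from Haar invariance and the definition of an $(\eta,t)$-TPE.
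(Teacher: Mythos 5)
Your proposal is correct and follows essentially the same route as the paper's proof: the contraction $\|(M_t[\mu])^k - M_t[\mu_H]\|_\infty \leq \eta^k$, a factor of $2^{2nt}$ to pass to the diamond norm, and a second factor of $2^{2nt}$ to convert the diamond-norm bound into the strong PSD condition of Definition~\ref{df2}, yielding $\eta^k 2^{4nt} \leq \varepsilon$. The only difference is that you explicitly derive the projector/absorption identity and the norm conversions, whereas the paper defers these steps to \cite{brandao,nakata}.
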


\begin{proof}\cite{brandao,nakata}
$||\delta_{\mu_{k}} -\delta_{\mu_{H}}||_\diamond  \leq 2^{2nt} \eta^k$. 
Where $||.||_\diamond$ is the diamond norm \cite{brandao}, and $\delta_\mu$ is defined as $\delta_{\mu}(X)=\int_{U \sim \mu} U^{\otimes t} X U^{\dagger \otimes t} d\mu(U)$. Furthermore, 
if $||\delta_{\mu_{k}} -\delta_{\mu_{H}}||_\diamond \leq \frac{\varepsilon}{ 2^{2nt} }$,
 then $\{\prod_{j=1,...k}p_{\pi (j)},\prod_{j=1,...k}U_{\pi(j)}\}$ is an $\varepsilon$ approximate $t$-design in the strong sense ($cf $. Definition \ref{df2}) \cite{brandao}. The value of $k$ in Proposition \ref{prop1} is thus obtained by setting: $\frac{\varepsilon}{2^{2nt}} \geq 2^{2nt}\eta^{k}$ .
\end{proof}
We will make use of the following fact proven in \cite{harrow2des}.
\begin{proposition}\cite{harrow2des}
\label{prp2}
If $\mu$ is a probability measure with support on a universal gate set $\mathcal{U} \subset U(2^n)$ \footnote{ In other words, for all $U \in \mathcal{U}$, $\mu(U) \neq 0$. }, then the following inequality holds for any positive integer $t$ .
\begin{equation}
\label{eq5}
||M_t[\mu]-M_t[\mu_H]||_\infty < 1.
\end{equation}
\end{proposition}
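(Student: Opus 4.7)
The plan is to identify the $1$-eigenspaces of $M_t[\mu]$ and $M_t[\mu_H]$, use the resulting block decomposition to rewrite the operator norm of interest, and then rule out unit-modulus spectrum on the orthogonal complement by a character-theoretic argument. Write $H:=\mathrm{Fix}(M_t[\mu_H])$ for the Haar-invariant subspace of the representation $U\mapsto U^{\otimes t,t}$.

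First I would show $\mathrm{Fix}(M_t[\mu])=H$. The inclusion $H\subseteq\mathrm{Fix}(M_t[\mu])$ is immediate. Conversely, if $M_t[\mu]v=v$ with $\|v\|=1$, then since the $p_i>0$ sum to one and each $\|U_i^{\otimes t,t}v\|=\|v\|$, equality in the triangle inequality $\|\sum_i p_iU_i^{\otimes t,t}v\|=\sum_i p_i\|U_i^{\otimes t,t}v\|$ forces $U_i^{\otimes t,t}v=v$ for every $U_i\in\mathcal{U}$. Continuity of $U\mapsto U^{\otimes t,t}v$ together with $\overline{\langle\mathcal{U}\rangle}=U(2^n)$ then gives $U^{\otimes t,t}v=v$ for every $U\in U(2^n)$, placing $v\in H$. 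The translation invariance of the Haar measure gives $M_t[\mu]M_t[\mu_H]=M_t[\mu_H]M_t[\mu]=M_t[\mu_H]$; since $M_t[\mu_H]$ is the orthogonal projector onto $H$, both $H$ and $H^\perp$ are $M_t[\mu]$-invariant, $M_t[\mu]|_H=I|_H$, and $M_t[\mu]-M_t[\mu_H]=0_H\oplus M_t[\mu]|_{H^\perp}$. Consequently $\|M_t[\mu]-M_t[\mu_H]\|_\infty=\|M_t[\mu]|_{H^\perp}\|_\infty\leq 1$, the upper bound being automatic because $M_t[\mu]$ is a convex combination of unitaries.

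The hard part is upgrading this to strict inequality. Supposing for contradiction that $\|M_t[\mu]|_{H^\perp}\|_\infty=1$, I would pass to the largest $M_t[\mu]$-invariant subspace of $H^\perp$ on which $M_t[\mu]$ acts isometrically; by the finite-dimensional spectral theory of contractions $M_t[\mu]$ restricted there is a genuine unitary, and the spectral theorem furnishes an eigenvector $v$ with unit-modulus eigenvalue $\lambda$. Reapplying the triangle-equality argument forces $U^{\otimes t,t}v=\lambda v$ for every $U\in\mathcal{U}$, so the assignment $\phi(U):=\lambda$ extends multiplicatively to a continuous group homomorphism $\langle\mathcal{U}\rangle\to U(1)$ and, by continuity and density, to a continuous character of $U(2^n)$, which must take the form $U\mapsto(\det U)^k$ for some $k\in\mathbb{Z}$. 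But $U\mapsto U^{\otimes t,t}$ is trivial on the center $\{e^{i\theta}I_{2^n}\}$ of $U(2^n)$, so $1=e^{i2^n k\theta}$ for all $\theta$, forcing $k=0$ and hence $\lambda=1$; then $v\in\mathrm{Fix}(M_t[\mu])=H$, contradicting $v\in H^\perp$, and the desired strict inequality follows. The principal technical obstacle is guaranteeing the unit-modulus eigenvector exists when $M_t[\mu]$ is not self-adjoint: one has to carefully identify the ``unitary part'' of $M_t[\mu]$ through its polar decomposition (or, in the symmetric case $\mu(U)=\mu(U^{-1})$, directly from Hermiticity) and verify that this subspace is nontrivial whenever the operator norm attains $1$.
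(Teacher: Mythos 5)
The paper does not actually prove this proposition --- it is imported verbatim from \cite{harrow2des} --- so your proposal must be judged against the standard argument there. The first half of what you write is correct and is exactly the standard setup: strict convexity of the Hilbert-space norm identifies $\mathrm{Fix}(M_t[\mu])$ with the Haar-invariant subspace $H$, translation invariance of $\mu_H$ gives the block decomposition $M_t[\mu]-M_t[\mu_H]=0_H\oplus T$ with $T:=M_t[\mu]|_{H^\perp}$, and $\|T\|_\infty\le 1$. Your closing character computation (continuous characters of $U(2^n)$ are powers of $\det$, and $U\mapsto U^{\otimes t,t}$ kills the center) is also sound, and some version of it is genuinely needed in any proof.

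The gap is the step you yourself flag and then do not close: from $\|T\|_\infty=1$ you cannot conclude that $T$ has a nontrivial invariant subspace on which it acts isometrically, nor a unit-modulus eigenvalue. For a non-normal contraction the operator norm and the spectral radius are different quantities --- the nilpotent block $\left(\begin{smallmatrix}0&1\\0&0\end{smallmatrix}\right)$ has norm $1$ and spectrum $\{0\}$ --- and $M_t[\mu]$ is not normal unless $\mu$ is symmetric, which the partially invertible setting of this paper specifically does not assume. Even if you patch the construction (e.g.\ intersect the decreasing subspaces $N_k=\{v:\|T^kv\|=\|v\|\}$; the stabilized intersection is $T$-invariant and carries a unitary restriction whenever it is nonzero), the dichotomy you obtain only rules out unit-modulus \emph{eigenvalues}, i.e.\ it bounds the spectral radius of $M_t[\mu]-M_t[\mu_H]$, not its operator norm --- and the proposition is a statement about the norm. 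The standard repair, which is what the polar decomposition you allude to actually forces on you, is to Hermitize: $\|T\|_\infty^2=\|T^\dagger T\|_\infty$ and $T^\dagger T=M_t[\mu^\dagger\ast\mu]|_{H^\perp}$ is positive semidefinite, so its norm \emph{is} an eigenvalue, and if that eigenvalue equals $1$ your equality argument yields a vector fixed by $W^{\otimes t,t}$ for all $W\in\mathcal{U}^{-1}\mathcal{U}$. One must then check that $\overline{\langle\mathcal{U}^{-1}\mathcal{U}\rangle}$ admits no new invariant vectors: it contains the closure of the commutator subgroup of $\langle\mathcal{U}\rangle$, hence $SU(2^n)$, and since $U^{\otimes t,t}$ is insensitive to global phases this already forces the vector into $H$. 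That last point is where your determinant/center observation belongs. As written, your proof establishes that the spectrum of $M_t[\mu]-M_t[\mu_H]$ lies strictly inside the unit disc, but not the claimed bound $\|M_t[\mu]-M_t[\mu_H]\|_\infty<1$.
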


In recent work $\varepsilon$-approximate $t$-designs have been shown to anti-concentrate \cite{juan2,mann}. Fundamentally, anti-concentration is a statement about probability distributions. For circuits that anti-concentrate, the probability of occurrence  of  $most$  outcomes  is  reasonably large \cite{vazirani}. The property of anti-concentration, as mentioned in the introduction, plays a key role in proofs of hardness approximate classical sampling \cite{juan,juan2,bremner,vazirani}. 
 We now present a theorem  on the anti-concentration of $t$ $\geq$ 2 -designs, shown  in  \cite{juan2} ( a similar result was derived independently in \cite{mann}).

\begin{proposition} \cite{juan2}
 Let \{$p_{i}$,$U_{i}$\} be an $\varepsilon$-approximate 2-design on U($2^{n}$). Let $\ket 0 ^{\otimes n}$:=$\ket 0 $ be an $n$-qubit input state to which we apply a unitary $U_{i}$ from the 2-design. Then, for any $ x\in \{0,1\}^{n}$ there exists a universal constant $0 \leq \alpha \leq 1$ such that:
\begin{equation}
\label{eq anticonc}
Pr_{U_{i} \sim \mu} (|\bra x U_{i} \ket 0|^{2} > \frac{\alpha(1-\varepsilon)}{2^{n}} ) \geq \frac{(1-\alpha)^{2} (1-\varepsilon)}{2(1+\varepsilon)}.
\end{equation}
$\mu$ being the probability measure over the $t$-design that results in choosing $U_{i}$ with probability $p_{i}$.
\end{proposition}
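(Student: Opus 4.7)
The plan is to reduce the statement to a second-moment (Paley--Zygmund) estimate on the non-negative random variable $p_{x}(U):=|\langle x|U|0\rangle|^{2}$. Since Paley--Zygmund requires control only on the first two moments, the $2$-design hypothesis will exactly suffice: it forces $\mathbb{E}_{\mu}[p_{x}]$ and $\mathbb{E}_{\mu}[p_{x}^{2}]$ to track their Haar values up to multiplicative error $1\pm\varepsilon$.

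First I would derive the moment bounds from Definition~\ref{df2}. Setting $t=1$ and $\rho=(|0\rangle\langle 0|)$ in \eqref{eq1} and sandwiching with $|x\rangle\langle x|$ gives
\begin{equation*}
(1-\varepsilon)\,\mathbb{E}_{\mu_{H}}[p_{x}]\ \leq\ \mathbb{E}_{\mu}[p_{x}]\ \leq\ (1+\varepsilon)\,\mathbb{E}_{\mu_{H}}[p_{x}],
\end{equation*}
and the analogous inequality with $t=2$, $\rho=(|0\rangle\langle 0|)^{\otimes 2}$, sandwiched with $(|x\rangle\langle x|)^{\otimes 2}$, yields the same two-sided bound for $\mathbb{E}_{\mu}[p_{x}^{2}]$. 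Next I would insert the standard Weingarten/Haar values $\mathbb{E}_{\mu_{H}}[p_{x}]=1/2^{n}$ and $\mathbb{E}_{\mu_{H}}[p_{x}^{2}]=2/(2^{n}(2^{n}+1))$, which give $\mathbb{E}_{\mu}[p_{x}]\geq (1-\varepsilon)/2^{n}$ and $\mathbb{E}_{\mu}[p_{x}^{2}]\leq 2(1+\varepsilon)/2^{2n}$ (absorbing the harmless $(2^{n}+1)/2^{n}\geq 1$ factor).

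Next I would invoke the Paley--Zygmund inequality: for any non-negative random variable $X$ with finite second moment and any $\theta\in[0,1]$,
\begin{equation*}
\Pr\!\left[\,X>\theta\,\mathbb{E}[X]\,\right]\ \geq\ (1-\theta)^{2}\,\frac{\mathbb{E}[X]^{2}}{\mathbb{E}[X^{2}]}.
\end{equation*}
Apply this with $X=p_{x}$ and $\theta=\alpha$. Because $\alpha\,\mathbb{E}_{\mu}[p_{x}]\geq \alpha(1-\varepsilon)/2^{n}$, the event $\{p_{x}>\alpha\,\mathbb{E}_{\mu}[p_{x}]\}$ is contained in $\{p_{x}>\alpha(1-\varepsilon)/2^{n}\}$, so the Paley--Zygmund bound transfers to the desired threshold. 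Substituting the two moment estimates gives
\begin{equation*}
\Pr_{U_{i}\sim\mu}\!\left[\,p_{x}(U_{i})>\tfrac{\alpha(1-\varepsilon)}{2^{n}}\,\right]\ \geq\ (1-\alpha)^{2}\,\frac{(1-\varepsilon)^{2}/2^{2n}}{2(1+\varepsilon)/2^{2n}}\ =\ \frac{(1-\alpha)^{2}(1-\varepsilon)^{2}}{2(1+\varepsilon)},
\end{equation*}
from which the stated bound follows (possibly after the mild simplification $(1-\varepsilon)^{2}\mapsto(1-\varepsilon)$ used in \cite{juan2}).

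I do not expect a genuine obstacle: the argument is a textbook second-moment estimate, and every ingredient (the two Haar moments, Paley--Zygmund, and the $2$-design inequality) is immediate. The only step worth careful bookkeeping is keeping track of the $(1\pm\varepsilon)$ factors and verifying that the threshold $\alpha(1-\varepsilon)/2^{n}$ genuinely lies below $\alpha\,\mathbb{E}_{\mu}[p_{x}]$ so that Paley--Zygmund can be applied at the threshold $\alpha\,\mathbb{E}_{\mu}[p_{x}]$ and then weakened to the stated threshold by monotonicity of probability.
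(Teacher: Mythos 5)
The paper states this proposition without proof, importing it verbatim from \cite{juan2}; your Paley--Zygmund argument with the Haar first and second moments $1/2^{n}$ and $2/(2^{n}(2^{n}+1))$ is exactly the proof given in that reference, and it is correct --- indeed your bound $(1-\alpha)^{2}(1-\varepsilon)^{2}/(2(1+\varepsilon))$ is slightly stronger than the stated one. The only step you leave implicit is that a strong $\varepsilon$-approximate $2$-design is automatically a strong $\varepsilon$-approximate $1$-design (needed for your $t=1$ moment bound); this follows by taking a partial trace of the $t=2$ inequality in Definition~\ref{df2}, since the partial trace preserves the positive semidefinite ordering.
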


\subsection{Measurement Based Quantum Computation (MBQC)}

As mentioned, MBQC is a natural landscape for the generation of random unitary ensembles. This section shows how one can generate such ensembles in the language of MBQC. We begin by defining graph states (see e.g. \cite{graphstate}), a main component of MBQC.
\begin{definition}
A graph state $|G\rangle$ is a pure entangled multipartite state of $n$ qubits in one-to-one correspondance with a graph $G=\{E,V\}$ of $n$ vertices. Every vertex $i \in V$ represents a qubit, and each edge $(i,j) \in E $ can be understood as a preparation entanglement. 
\begin{equation}
\label{eq6}
|G\rangle=\prod_{{(i,j)}\in E}CZ_{i,j}|+\rangle^{\otimes n},
\end{equation}
where $|+\rangle=\frac{1}{\sqrt{2}}(|0\rangle+|1\rangle)$ and $CZ_{i,j}$ is the controlled $Z$ gate applied to qubits $i$ and $j$ (see e.g. \cite{nielsenchuang}).
\end{definition}
For the purposes of computation, a subset of qubits $I \subset V$ is defined as the computational input with initial input state $|\psi_{in}\rangle_{I}$, and the associated \emph{open} graph state has the form 
\begin{equation}
\label{eq7}
|G(\psi)\rangle=\prod_{{(i,j)}\in E}CZ_{i,j}|\psi_{in}\rangle_{I} \otimes |+\rangle_{V/I}.
\end{equation}
A cluster state \cite{raussbriepra} is a particular type of graph state whose corresponding graph is a regular two dimensional square lattice.  In MBQC, computation is carried out by performing measurements on all but a subset $O \subset V$ of  qubits. In general one has $|O|\geq |I|$, though here we are concered only with the case $|I|=|O|$. 
By performing the measurements adaptively on a universal resource state (such as the cluster state) - via some corrective strategy such as the gflow \cite{elham} - one can implement any desired unitary $U \in U(2^{|O|})$ on the input state, which is teleported to the (unmeasured) output position by the end of the computation. At the end of the computation, we are left with the following state

\begin{equation}
\label{eq8}
|OUT\rangle=|M\rangle_{V/O}\otimes U|\psi_{in}\rangle_{O},
\end{equation}
where $|M\rangle_{V/O}$ represents the measurement outcomes, performed adaptively. Cluster states are universal resources for measurement based quantum computation (MBQC) \cite{briegelrauss,raussbriepra}, even when all the measurement angles are chosen from the XY plane \cite{mantri}.

On the other hand, performing the measurements non-adaptively (that is, simultaneously and without a corrective strategy) generates a random unitary ensemble $\{p_{i},U_{i}\}$, seen from noting that we can (for an appropriate choice of measurement bases) write $|G(\psi)\rangle$ as, 
\begin{equation}
\label{eq9}
|G(\psi)\rangle=\sum_{i} \sqrt{p_{i}}|M_{i}\rangle_{V/O}\otimes U_{i}|\psi_{in}\rangle_{O}.
\end{equation}
$|M_{i}\rangle_{V/O}$ denotes a possible string of measurement results which implements unitary $U_{i}$ on the input state. This measurement result occurs with probability $p_{i}$. 
In our case -  MBQC on unweighted cluster states - the probability distribution is  uniform,  $p_i=\frac{1}{2^{|V/O|}}$. Figure \ref{fig1} shows an example of a non-adaptive MBQC scheme on a 2-row, 2-column cluster state at XY plane measurements $\alpha$, $\beta$, $|V|=4$, and $|O|$=2. This non-adaptive scheme generates the random unitary ensemble,
\begin{equation}
\label{eq10}
\{\frac{1}{4},CZ(HZ(\alpha+m_{1}\pi)\otimes HZ(\beta+m_{2}\pi))\}
\end{equation}
where $H$ is the Hadamard gate, $Z(\alpha)=e^{-i\frac{\alpha}{2}Z}$ is a rotation by angle $\alpha$ around the Z axis, $CZ$ is the controlled-Z gate and $m_i\in\{0,1\}$ represents the measurement outcome of qubit $i$, following the convention that $m_1=0$  is taken to mean measurement outcome corresponding to a projection onto $|+_{\alpha}\rangle=\frac{|0\rangle + e^{i\alpha}|1\rangle}{\sqrt{2}}$ 
(respectively  $|+_{\beta}\rangle=\frac{|0\rangle + e^{i\beta}|1\rangle}{\sqrt{2}}$  for $m_{2}=0$) and $m_{1}=1$ a projection onto $|-_{\alpha}\rangle=\frac{|0\rangle - e^{i\alpha}|1\rangle}{\sqrt{2}}$ (respectively  $|-_{\beta}\rangle=\frac{|0\rangle - e^{i\beta}|1\rangle}{\sqrt{2}}$  for $m_{2}=1$).

 \begin{figure}[H]
\begin{center}
\graphicspath{}
\includegraphics[trim={2 10cm 250 0cm} , scale=0.3]{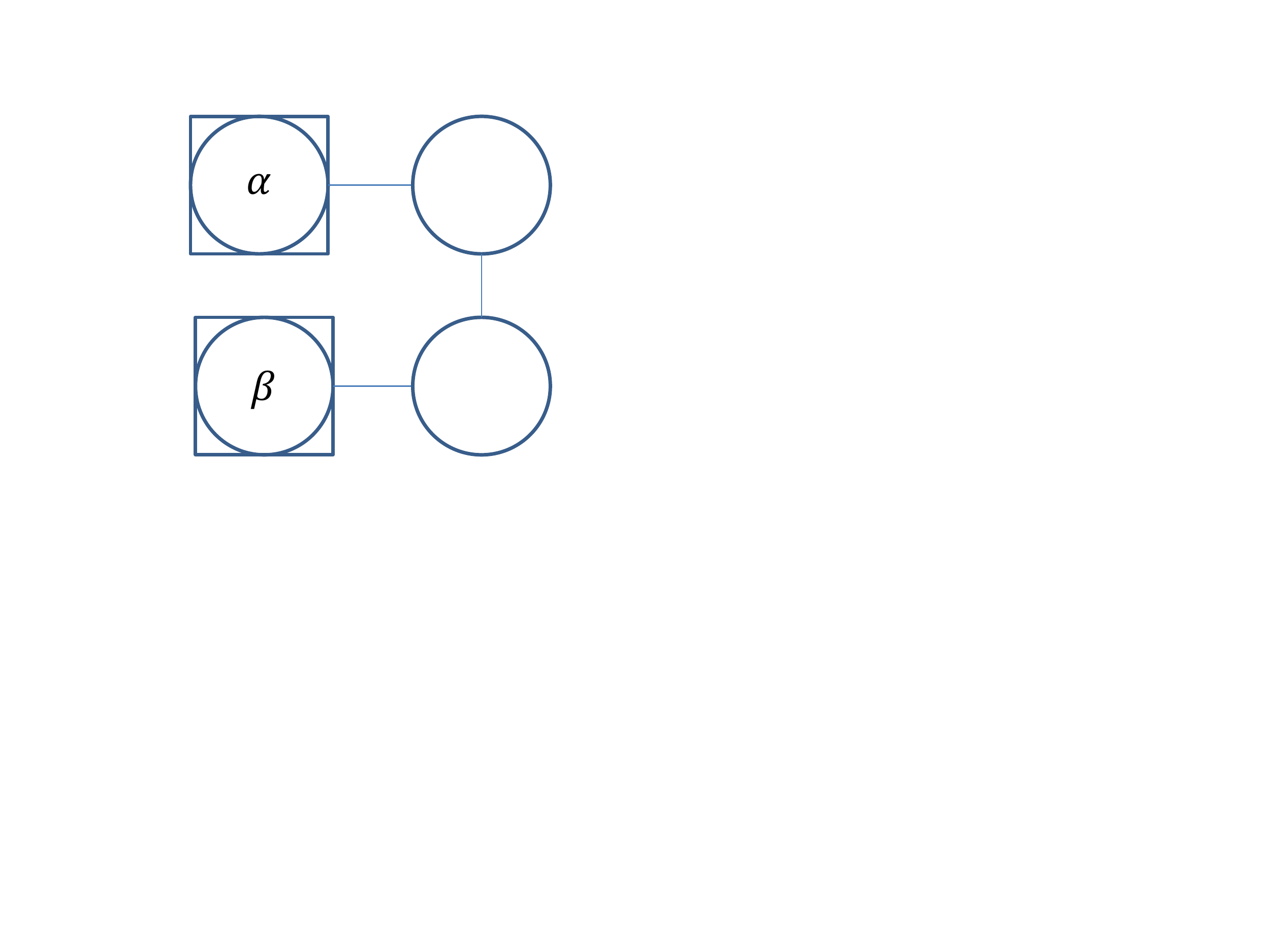}
\caption{MB scheme on a 2-row, 2-column cluster state. The input qubits (squared circles) when measured non-adaptively at XY angles $\alpha$ and $\beta$ apply to the unmeasured output qubits (empty circles) a random unitary of the ensemble of Equation (\ref{eq10}) chosen with a uniform probability of $\frac{1}{4}$. The horizontal and vertical lines are preparation entanglements. }
\label{fig1}
\end{center}
\end{figure}

\subsection{Notions of simulability, and structure of a standard hardness of approximate classical sampling
proof} 

Let $\{C_n\}$ be a family of quantum circuits with $n$ input qubits. Suppose also that this family satisfies some uniformity condition
(e.g. \cite{iqp,hoban}) to ensure no computationally unreasonable preparations are required with varying inputs
of the family. Let $P_n$ denote the probability distribution associated with measuring the outputs
of $C_n$ in the computational (Z) basis . We say that the circuit family $\{C_n\}$ is \textit{classically simulable
in the strong sense} if any output probability in $P_n$, and any marginal probability of $P_n$
can be approximated up to $m$ digits of precision by a classical $poly(n,m)$ time algorithm \cite{iqp}. 

Because the output probabilities of universal-under-post-selection quantum circuits are $\sharp$Phard to exactly compute in worst-case \cite{nest} (and even $\sharp$P-hard to approximate up to relative error 1/4 in worst-case \cite{ising,goldberg} ), this makes the
task of strong classical simulability formidable even for quantum computers. In order to find tasks where one clearly sees a quantum advantage
(in other words, tasks which are hard for classical computers but which can nevertheless be performed efficiently on some, possibly nonuniversal
\cite{iqp}, quantum device), one needs to introduce the notion of classical simulability in the weak sense. 

\textit{Classical simulability in the
weak sense} means that the classical algorithm can
sample, i.e output $x$ (one of the possible outputs
of circuit $C_n$) with probability $p_x \in P_n$, in
$poly(n)$ time. For practical purposes (due to experimental imperfections), one usually requires a notion of approximate classical simulability in the weak sense (henceforth referred to as \textbf{approximate classical
sampling}), of which many exist \cite{nest,iqp}.
In our work we adopt the following definition of
approximate classical sampling (taken from \cite{iqp}).
\begin{definition}
We say that a family of circuits $\{C_n\}$ on $n$-qubtis
where each $C_n$ has a set of possible outputs $x$
with an associated output probability $p_x$ is
\emph{approximately classically simulable in the weak sense} (i.e admits an approximate classical sampling), up to an
$l_1$-norm distance (or total variation distance ) $\sigma$,
if there exists a $poly(n)$ time classical algorithm
A sampling $x$ with probability $p_{A_x}$ for which the
following holds 
\begin{equation}
\label{eq class sim}
\forall C_n \in \{C_n\} ,\sum_{x} |p_{x}-p_{A_{x}}| \leq \sigma . 
\end{equation}
\end{definition}
The expression of quantum speedup is precisely
that no classical $poly(n)$ time algorithm A
exists which can approximately sample (in the
sense of Equation (\ref{eq class sim}) ) given that some complexity
theoretic conjectures hold.

The argument for quantum speedup comes
from two directions. 
Firstly consider the power
of a classical algorithm which is able to approximately
sample from the distribution $p_x$ as defined
above. The trick is to boost this up from sampling
$p_x$, to approximating $p_x$ (that is a simulation in
the strong sense). This is the role of Stockmeyer’s
counting theorem, and it does this at the third
level of the polynomial hierarchy (PH) \cite{stockmeyer}. In
particular it says that there is an algorithm at
the third level (concretely in $FBPP^{NP}$) which
takes the classical algorithm for sampling $p_x$ and outputs
an approximation of $p_x$, up to \textbf{additive error}. The remaining steps
on the classical side are to make this approximation
stronger, and work for \textbf{relative errors}, which
is what one wants for realistic experimental errors \cite{bremner,juan,juan2}.
To do this step we rely on the fact that the
output distributions of our families of circuits are
not too peaked, a property known as anti-concentration \cite{harrowmontanaro}.
This is where conjecture III) (see the introduction) comes in. The final statement
is that for a fraction $f$ of the family of circuits
considered, the output distrubution $p_x$ can
be approximated up to a relative error (see Section 5).

The other direction comes from the known
hardness of sampling quantum distributions. The
first statement in this direction says that appoximating
$p_x$ (exactly, or up to relative error) is $\sharp$P
hard in the worst case (that is, for one or more
of the circuits in the family), as mentioned earlier.
This is standard following universality of the
circuit families \cite{bremner,bremnermontanaro,gao,juan,juan2,mann,ising,boxio,harrowmontanaro}. The
difficulty here is to match this to the statement
about the fractions of the circuits considered, in
order to match the relative error approximation
we would have classically from above. To this
end we are forced to add an assumption about
the hardness of the average case (over the circuit
family). This is the content of conjecture II) (see
introduction) and there are various justifications
for this, depending on which families
of problems it is related to \cite{juan,juan2,bremnermontanaro,ising,bremner,mann,boxio}.
Bringing these together we have, assuming conjectures
II) and III), that the existence
of a classical algorithm approximately sampling
$p_x$ (in the sense of Equation (\ref{eq class sim}) ) implies that solving
a $\sharp$P hard problem can be achieved at the
third level of the PH. This implies the collapse of
the PH to its 3rd level by a theorem of Toda \cite{toda}.
Thus, if one believes this cannot be possible (conjecture
I) in introduction) one is forced to give up the
possibility of such a classical sampling algorithm.

\section{Main Results}
Let $\mathcal{U_B} \subset U(4)$ be $any$ partially invertible universal set in $U(4)$. Let  $\mathcal{U_M} \subset \mathcal{U_B}$, with $\mathcal{U_M}$ containing unitaries and their inverses and with unitaries composed of algebraic entries, and its complement $\mathcal{U_{B/M}} \subset \mathcal{U_B}$ such  that $\mathcal{U_M}$ and $\mathcal{U_{B/M}}$ are both universal in $U(4)$. Define
\begin{equation}
\label{eqpartinv}
B=\{\frac{1}{|\mathcal{U_B}|},U_{i} \in \mathcal{U_B}\}.
\end{equation}  
Denote the $k$-fold concatenation of $B$  by
\begin{equation}
\label{eqbk}
    B^k=\{\frac{1}{|\mathcal{U}_{\mathcal{B}^k}|},\prod_{j=1,...k}U_{\pi(j)} \in \mathcal{U}_{\mathcal{B}^k} \}
\end{equation}
 where $U_{\pi(j)} \in \mathcal{U_B}$, and $\pi$ is a function defined as in Proposition \ref{prop1}. 
Define \footnote{ This definition of $block(B^k)$ is for even $n$ , the odd $n$ case follows straightforwardly.}  
\begin{equation}
\label{eqblockbk}
 block(B^k)=\{\frac{1}{|\mathcal{U}_{\mathcal{B}^k}|^{n-1}},(1_{2 \times 2} \otimes U^{j_1}_{2,3} \otimes U^{j_2}_{4,5}\otimes ...\otimes U^{j_{\frac{n}{2}-1}}_{n-2,n-1} \otimes 1_{2 \times 2} ). (U^{j_\frac{n}{2}}_{1,2}\otimes U^{j_{\frac{n}{2}+1}}_{3,4}\otimes ...\otimes U^{j_{n-1}}_{n-1,n})\in \mathcal{U}_{block(\mathcal{B}^{k})} \}, 
 \end{equation}
 where $U^{j}_{i,i+1}  \in \mathcal{U}_{\mathcal{B}^k}$, $i \in \{1,...,n-1\}$ and $j \in \{1,...,|\mathcal{U}_{\mathcal{B}^k}|\}$. Let $block^{L}(B^k)$  be the $L$-fold concatenation of $block(B^k)$, defined as
 \begin{equation}
 \label{eqblockblk}
 block^{L}(B^k)=\{\frac{1}{|\mathcal{U}_{\mathcal{B}^k}|^{(n-1)L}} ,\prod_{j=1,...,L}U_{\pi(j)} \in \mathcal{U}_{block^L(\mathcal{B}^{k})}\} 
 \end{equation}
 where here also $\pi$ is defined as in Proposition \ref{prop1}, and $U_{\pi(j)} \in \mathcal{U}_{block(\mathcal{B}^{k})}$  .

 Finally, let $a=\frac{|\mathcal{U_M}|}{|\mathcal{U_B}|}$.  Our first main result is the following theorem which holds for the above defined partially invertible universal set $\mathcal{U_{B}}$:
\begin{theorem}
\label{th1}
For any  $0<\varepsilon _{d} <1$, and for some $0<C<1$,  if : 
\begin{equation}
k \geq \frac{1}{log_{2}(\frac{1}{1+(C-1)a})}(10t+n^2t-nt+n+log_{2}(\frac{1}{\varepsilon^{'}}))
\end{equation}
and
\begin{equation}
L \geq \frac{1}{log_2(\frac{1}{\varepsilon^{'}+P(t)})}(4nt +log_2(\frac{1}{\varepsilon_d})), 
\end{equation}
where 
\begin{equation}
\label{eqpt}
P(t)=(1+\frac{(425\lfloor{log_2(4t)} \rfloor^2 t^5 t^{3.1/log(2)})^{-1}}{2})^{-1/3}, 
\end{equation}
$\varepsilon^{'} <1-P(t),$ and $n \geq \lfloor{2.5log_2(4t)} \rfloor$, then $block^{L}(B^{k})$, formed from partially invertible universal set $\mathcal{U_{B}}$, is a $\varepsilon_{d}-$ approximate $t$-design on $U(2^{n})$, for any $t$. 
\end{theorem}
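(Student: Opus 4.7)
The plan is to establish Theorem~\ref{th1} in two stages. First, I show that $block(B^{k})$ is an $(\eta,t)$-TPE on $U(2^{n})$ with $\eta = \varepsilon' + P(t) < 1$ for the value of $k$ specified in the hypothesis; second, I apply Proposition~\ref{prop1} to the $L$-fold concatenation $block^{L}(B^{k})$. The central observation for the first stage is the convex decomposition
\begin{equation}
M_{t}[\mu_{B}] \;=\; a\, M_{t}[\mu_{M}] \;+\; (1-a)\, M_{t}[\mu_{\mathcal{B}/\mathcal{M}}],
\end{equation}
where $\mu_{M}$ and $\mu_{\mathcal{B}/\mathcal{M}}$ denote the uniform measures on $\mathcal{U_M}$ and $\mathcal{U_{B/M}}$ respectively. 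Because $\mathcal{U_M}$ is universal in $U(4)$, contains inverses, and consists of algebraic-entry unitaries, Proposition~\ref{prp2} together with the quantitative spectral-gap analysis of \cite{brandao} produces some $0 < C < 1$ such that $\|M_{t}[\mu_{M}] - M_{t}[\mu_{H}]\|_{\infty} \leq C$. Combining this with the trivial bound $\|M_{t}[\mu_{\mathcal{B}/\mathcal{M}}] - M_{t}[\mu_{H}]\|_{\infty} \leq 1$ via convexity and the triangle inequality yields
\begin{equation}
\|M_{t}[\mu_{B}] - M_{t}[\mu_{H}]\|_{\infty} \;\leq\; aC + (1-a) \;=\; 1 + a(C-1) \;<\; 1.
\end{equation}

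Using that $M_{t}[\mu_{H}]$ is an orthogonal projector satisfying $M\, M_{t}[\mu_{H}] = M_{t}[\mu_{H}]\, M = M_{t}[\mu_{H}]$ for every moment operator $M$ of a probability measure on $U(4)$, I obtain the identity $(M - M_{t}[\mu_{H}])^{k} = M^{k} - M_{t}[\mu_{H}]$ by a direct binomial expansion, and consequently
\begin{equation}
\|M_{t}[\mu_{B^{k}}] - M_{t}[\mu_{H}]\|_{\infty} \;\leq\; \bigl(1 + a(C-1)\bigr)^{k}
\end{equation}
on two qubits. I would then transfer this bound to the $n$-qubit brick-wall ensemble $block(B^{k})$. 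Writing each of the two brick-wall layers as a product of commuting 2-qubit moment operators on disjoint pairs and telescoping against the corresponding brick with Haar-random 2-qubit gates, the distance $\|M_{t}[\mu_{block(B^{k})}] - M_{t}[\mu_{block(\mathrm{Haar})}]\|_{\infty}$ is controlled by a prefactor polynomial in $n$ and $t$ (accounting for the number of bricks and for the conversions between operator, Hilbert--Schmidt and diamond norms over the $t$-copy space) times $(1+a(C-1))^{k}$. The hypothesis $k \geq \frac{1}{\log_{2}(1/(1+(C-1)a))}(10t + n^{2}t - nt + n + \log_{2}(1/\varepsilon'))$ is exactly what is needed to absorb these overheads and push this distance below $\varepsilon'$. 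Combining with the BHH-type bound $\|M_{t}[\mu_{block(\mathrm{Haar})}] - M_{t}[\mu_{H}]\|_{\infty} \leq P(t)$, valid for $n \geq \lfloor 2.5 \log_{2}(4t)\rfloor$ and resting on the detectability lemma \cite{aharonov} as in \cite{brandao,harrow2des}, a further triangle inequality delivers $\|M_{t}[\mu_{block(B^{k})}] - M_{t}[\mu_{H}]\|_{\infty} \leq \varepsilon' + P(t)$, which is strictly less than $1$ by the hypothesis $\varepsilon' < 1 - P(t)$.

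Finally, applying the same projector identity to $block(B^{k})$ gives $\|M_{t}[\mu_{block^{L}(B^{k})}] - M_{t}[\mu_{H}]\|_{\infty} \leq (\varepsilon' + P(t))^{L}$, so $block^{L}(B^{k})$ is an $(\varepsilon'+P(t),t)$-TPE on $U(2^{n})$. Substituting $\eta = \varepsilon' + P(t)$ into Proposition~\ref{prop1} recovers the stated $L$-threshold and hence the $\varepsilon_{d}$-approximate $t$-design conclusion. The principal difficulty I anticipate lies in the brick-wall transfer step: carrying the 2-qubit spectral bound through the brick-wall structure while tracking the exact dimensional and norm-conversion overheads that produce the threshold $10t + n^{2}t - nt + n$ on $k$, and in particular adapting the spectral-gap machinery of \cite{brandao,harrow2des} to the partially invertible setting, where only the subensemble $\mathcal{U_M}$ satisfies the classical inverse-and-algebraic hypotheses so that one pays an unavoidable factor of $a$ in the per-step contraction rate.
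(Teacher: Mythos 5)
Your proposal is correct and follows essentially the same route as the paper's proof: the convex splitting $M_t[\mu_B]=aM_t[\mu_M]+(1-a)M_t[\mu_{B/M}]$ giving the gap $1+(C-1)a$, the transfer to the brick-wall layout by telescoping each $P'_{i,i+1}$ against its Haar counterpart, the detectability-lemma bound $P(t)$ on the Haar brick-wall, and Proposition~\ref{prop1} applied to the resulting $(\varepsilon'+P(t),t)$-TPE to fix $L$. The only (harmless) deviations are that you power the two-qubit moment operator directly via the projector identity where the paper instead passes through the $\varepsilon$-approximate $t$-design property of $B^k$ and back to an operator-norm bound $2^{t+1}\varepsilon$ (your route in fact yields a slightly smaller admissible $k$), and that the telescoping prefactor is exponential, of size $2^{O(n^2t)}$, rather than polynomial as you state --- though its logarithm is what enters the $k$ threshold, so your stated bound on $k$ is consistent.
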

Here $\lfloor.\rfloor$ denotes the floor function. An Immediate corollary to the above theorem is the following less technical statement.
 
\begin{corollary}
\label{cor1}
Let $B$ be the random unitary ensemble formed by  chosing uniformly at random from a partially invertible universal set . Random quantum circuits on $n$-input qubits of depth $D=2.k.L=O(n^{3}t^{12})$ \footnote{ Note that, as in \cite{brandao}, $\dfrac{1}{log_{2}(\dfrac{1}{P(t)+\varepsilon^{'}})} \sim O(t^{9.47}log^{2}(t)) < O(t^{10})$, as $t \to \infty $ and thus $k.L \sim O(t^{10}).O(n^3t^2)=O(n^3t^{12})$.}  
and described as follows (for $n$ even, odd $n$ case follows straightforwardly.) \begin{enumerate}
    \item For steps 1 to $k$ (layer $j=1$), apply  unitaries of the form $U_{1,2} \otimes U_{3,4}...\otimes U_{n-1,n}$, where the $U_{i,i+1}$'s are random unitaries sampled independently from the random unitary ensemble $B$, and acting non-trivially on input qubits $i$ and $i$+1.
    \item For steps $k$ to $2k$ (layer $j=2$),  apply  unitaries of the form $U_{2,3} \otimes U_{4,5}...\otimes U_{n-2,n-1}$, where the $U_{i,i+1}$'s are random unitaries sampled independently from the random unitary ensemble $B$, and acting non-trivially on input qubits $i$ and $i$+1.
    \item Repeat  1. for every odd numbered layer $j$ formed of $k$ steps, and repeat 2. for every even numbered layer $j$ formed of $k$ steps, for $j=3,...,2L$.
     
\end{enumerate}
are $\varepsilon_d$-approximate $t$-designs, for any $t$ and for $n \geq \lfloor{2.5log_2(4t)} \rfloor$ .
\end{corollary}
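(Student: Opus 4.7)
The plan is to show that the circuit described in steps 1--3 of the corollary realises exactly the ensemble $block^{L}(B^{k})$ defined in Equations (\ref{eqpartinv})--(\ref{eqblockblk}), after which Theorem~\ref{th1} gives the $\varepsilon_{d}$-approximate $t$-design property and the stated depth scaling reduces to arithmetic on the bounds on $k$ and $L$.

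First I would match the circuit description to the algebraic ensemble. In layer $j=1$, each of the $k$ successive steps applies a product $U_{1,2}^{(s)} \otimes U_{3,4}^{(s)} \otimes \ldots \otimes U_{n-1,n}^{(s)}$ for $s=1,\ldots,k$, where each $U_{i,i+1}^{(s)}$ is drawn uniformly and independently from $\mathcal{U_B}$. Since the factors acting on different qubit pairs commute, multiplying the $k$ steps together yields, on each pair $(i,i+1)$, a product of $k$ independent uniform draws from $\mathcal{U_B}$, i.e.\ a uniform draw from $\mathcal{U}_{\mathcal{B}^k}$, and these draws are independent across pairs. Thus layer $j=1$ samples uniformly $U^{j_{n/2}}_{1,2}\otimes U^{j_{n/2+1}}_{3,4}\otimes \ldots \otimes U^{j_{n-1}}_{n-1,n}$ with each factor in $\mathcal{U}_{\mathcal{B}^k}$. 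Layer $j=2$ is identical up to a shift, producing $1_{2\times 2} \otimes U^{j_1}_{2,3} \otimes \ldots \otimes U^{j_{n/2-1}}_{n-2,n-1} \otimes 1_{2\times 2}$ with factors in $\mathcal{U}_{\mathcal{B}^k}$. Taken together the two layers realise a uniform sample of $block(B^k)$ as in Equation~(\ref{eqblockbk}), and iterating the alternating pattern for $j=3,\ldots,2L$ produces the ordered product of $L$ such blocks, matching $block^{L}(B^{k})$ in Equation~(\ref{eqblockblk}) exactly.

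Having identified the ensembles, I invoke Theorem~\ref{th1} to conclude that the circuit implements an $\varepsilon_{d}$-approximate $t$-design on $U(2^n)$, provided $n \geq \lfloor 2.5\log_2(4t)\rfloor$ and $k,L$ satisfy the displayed lower bounds. It then remains to verify the depth. From Theorem~\ref{th1}, with $C,a$ treated as constants bounded away from $0$ and $1$ and $\varepsilon'$ held at a constant strictly below $1-P(t)$, one has $k = O(n^{2}t)$. The footnote computation using Equation~(\ref{eqpt}) gives $1/\log_2(1/(P(t)+\varepsilon')) = O(t^{10})$ as $t\to\infty$, so $L = O(nt^{11})$. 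Each ``step'' is a tensor product of two-qubit gates on disjoint qubit pairs and hence has circuit depth~$1$; the total circuit depth is therefore $D = 2kL = O(n^{3}t^{12})$, as claimed.

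The only point requiring any care — not a genuine obstacle — is the bookkeeping lemma that composing $k$ independent uniform draws from $\mathcal{U_B}$ gives a uniform draw from the multiset $\mathcal{U}_{\mathcal{B}^k}$ with weight $1/|\mathcal{U_B}|^{k} = 1/|\mathcal{U}_{\mathcal{B}^k}|$, combined with the independence of the draws over disjoint qubit pairs within a layer and over layers. Once this is recorded, the corollary is an immediate consequence of Theorem~\ref{th1}.
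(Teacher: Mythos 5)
Your proposal is correct and follows exactly the route the paper intends: the paper presents this as an immediate consequence of Theorem~\ref{th1}, and your write-up simply makes explicit the identification of the circuit with the ensemble $block^{L}(B^{k})$ and the arithmetic $D=2kL=O(n^{2}t)\cdot O(nt^{11})=O(n^{3}t^{12})$. The only nitpick is that $\varepsilon'$ cannot literally be a $t$-independent constant (since $1-P(t)\to 0$); it must be taken as, e.g., a constant fraction of $1-P(t)$, but the resulting $\log_2(1/\varepsilon')=O(\log t)$ term is subdominant, so your conclusion stands.
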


As shown in \cite{mezher}, one can generate random ensembles in  MBQC by connecting 2-qubit graph gadgets in a regular way. 
Given a graph gadget $G_B$, which gives an ensemble over a partially invertible universal set, we will see that Figures \ref{fig2a}, \ref{fig2b} and \ref{figlblock} show how to compose copies of $G_B$ to get the  $n$-qubit cluster state gadget $LG_{block(B^{k})}$ giving rise to the ensemble $block^L(B^{k})$.

We give explicit examples of such gadgets $G_B$ below (see Fig.~\ref{fig3}).
Obtaining the $k$-fold concatenation $B^k$ of the random unitary ensemble $B$  translates in MBQC to constructing a graph state gadget $kG_B$ which is formed of sticking together $k$ copies of $G_B$.  More precisely, if $G_B$ is a cluster state gadget formed of $m$ columns and 2-rows, then $kG_B$ is a cluster state gadget formed of $k(m-1)+1$ columns and 2-rows, where the measurement angles are repeated after each block of $m$ rows, see Figure~\ref{fig2a}.
Then, connecting these $kG_B$ gadgets in a brickwork like fashion gives rise to the $block(B^{k})$. We call this the graph state gadget $G_{block(B^{k})}$ and it is represented in Figure~\ref{fig2b}.
Finally, taking $L$ copies of these, concatenated after each other as in Figure \ref{figlblock} gives rise to a $t$-design, as is captured in the following corollary - which is a direct consequence of Theorem \ref{th1}, and the graph state translation to MBQC.
\begin{corollary}
\label{cor2}
If $G_B$ is a 2-qubit graph state gadget giving rise to a random unitary ensemble $B$ over a partially invertible universal set $\mathcal{U_B}$, then, for any $0<\varepsilon _{d} <1$,  and for some $0<C<1$, the graph state gadget $LG_{block(B^k)}$ applies to its unmeasured $n$ output qubits a unitary sampled from a $\varepsilon_{d}$-approximate $t$-design on $U(2^n)$ when,

$$k \geq \frac{1}{log_2(\frac{1}{1+(C-1)a})}(8t+(nt+2t+n^{2}t-2nt+n)+ log_2(\frac{1}{\varepsilon^{'}})), $$ $$ L \geq \frac{1}{log_2(\frac{1}{\varepsilon^{'}+P(t)})}(4nt +log_2(\frac{1}{\varepsilon_d})), $$ $\varepsilon^{'}< 1-P(t)$ , and $n \geq \lfloor{2.5log_2(4t)} \rfloor$,  for any t. \footnote{ A particular choice of $\varepsilon^{'}$ can be $\varepsilon^{'}=a(1-P(t))$, where $0< a < 1$ is a constant independent of $t$.}  \\
\end{corollary}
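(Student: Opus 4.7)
The plan is to reduce Corollary~\ref{cor2} to Theorem~\ref{th1} by establishing that the graph state gadget $LG_{block(B^k)}$ implements, on its output qubits, exactly the random unitary ensemble $block^L(B^k)$ that Theorem~\ref{th1} analyzes. Once this correspondence is in place, it only remains to check that the bounds stated in the two results agree. A comparison shows that the condition on $L$ is identical in both statements, and the bound on $k$ in Corollary~\ref{cor2} expands to $8t+(nt+2t+n^{2}t-2nt+n) = 10t+n^{2}t-nt+n$, matching the one in Theorem~\ref{th1}.

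The translation proceeds level by level. By hypothesis, the seed gadget $G_{B}$ produces on its two output qubits, via the decomposition of Equation~(\ref{eq9}), the ensemble $B$ defined in Equation~(\ref{eqpartinv}) with uniform weights. Horizontally concatenating $k$ copies of $G_{B}$ so that the outputs of one copy become the inputs of the next, while repeating the measurement angles every $m{-}1$ columns as in Figure~\ref{fig2a}, yields non-adaptive measurements on disjoint sets of qubits with independent outcomes, so the composite gadget $kG_{B}$ realises the $k$-fold product ensemble $B^{k}$ of Equation~(\ref{eqbk}). Arranging these $kG_{B}$ gadgets in the two-layer brickwork of Figure~\ref{fig2b}, with one layer acting on pairs $(1,2),(3,4),\ldots,(n-1,n)$ and the other on the shifted pairs $(2,3),\ldots,(n-2,n-1)$, again gives independent measurement outcomes on disjoint subgraphs and thereby implements a uniform draw from $\mathcal{U}_{block(\mathcal{B}^{k})}$ in the form of Equation~(\ref{eqblockbk}). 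Finally, stacking $L$ copies of this brickwork gadget as in Figure~\ref{figlblock} produces, by the same independence argument, a uniform draw from $\mathcal{U}_{block^{L}(\mathcal{B}^{k})}$, i.e.\ exactly the ensemble $block^{L}(B^{k})$ of Equation~(\ref{eqblockblk}). Combining this with Theorem~\ref{th1} applied to the same $k,L$ then yields the claim.

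The main step to justify carefully is the independence statement used at each level of concatenation: one must verify that measurement outcomes on qubits belonging to different copies of the seed gadget factorise into independent uniformly distributed bit strings, each labelling one unitary drawn from the local ensemble. This follows from the product structure of the underlying cluster state (the $CZ$ edges responsible for distinct blocks touch only the shared input/output wires) together with the fact that non-adaptive measurements at generic $XY$ angles give uniform marginals on those disjoint wire sets, but because the definitions of $B^{k}$, $block(B^{k})$ and $block^{L}(B^{k})$ presuppose such independent uniform sampling, it is worth making this factorisation explicit before invoking Theorem~\ref{th1}. Apart from this bookkeeping, no additional technical ingredient beyond Theorem~\ref{th1} is needed.
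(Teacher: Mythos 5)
Your proposal is correct and follows essentially the same route as the paper, which offers no separate proof of Corollary~\ref{cor2} but states that it is a direct consequence of Theorem~\ref{th1} together with the graph-state translation depicted in Figures~\ref{fig2a}, \ref{fig2b} and \ref{figlblock}. Your check that $8t+(nt+2t+n^{2}t-2nt+n)=10t+n^{2}t-nt+n$ matches the bound in Theorem~\ref{th1}, and your explicit factorisation argument for the independence and uniformity of measurement outcomes across gadget copies (which the paper leaves implicit, relying on $p_i=1/2^{|V/O|}$ from Equation~(\ref{eq9})), are exactly the intended content.
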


The graph state gadget $G_B$ in Figure \ref{fig3} generates a random unitary ensemble where elements of a partially invertible universal set are selected uniformly at random. This is proven in the appendix.

\bigskip
\begin{figure}[H]
\begin{center}
\graphicspath{}
\includegraphics[trim={10 4cm 0 0cm} , scale=0.2]{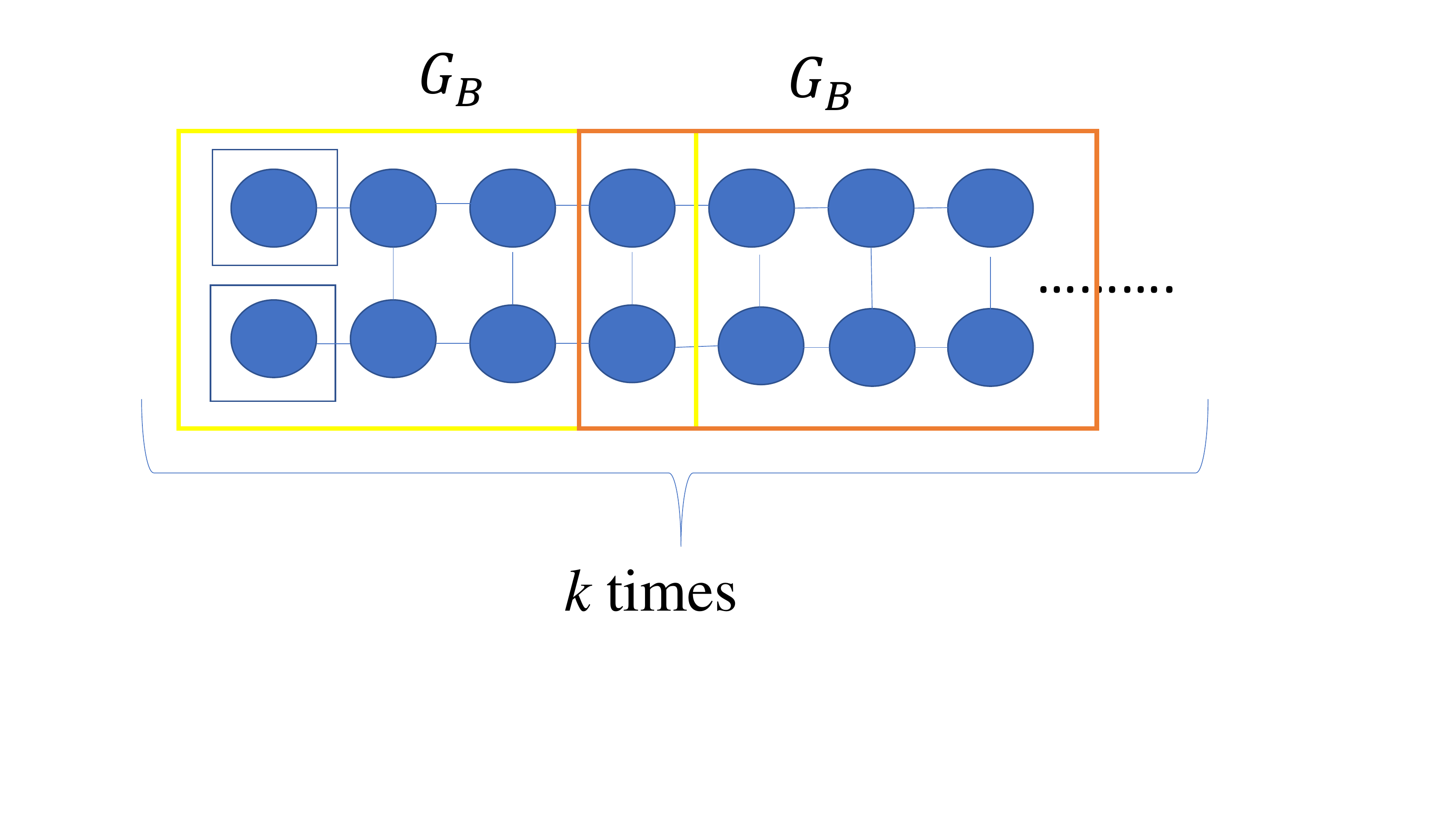}
\caption{Graph state gadget $kG_B$ giving rise to the random ensemble $B^k$.}
\label{fig2a}
\end{center}
\end{figure}

\begin{figure}[H]
\begin{center}
\graphicspath{}
\includegraphics[trim={2 0cm 150 0cm} , scale=0.3]{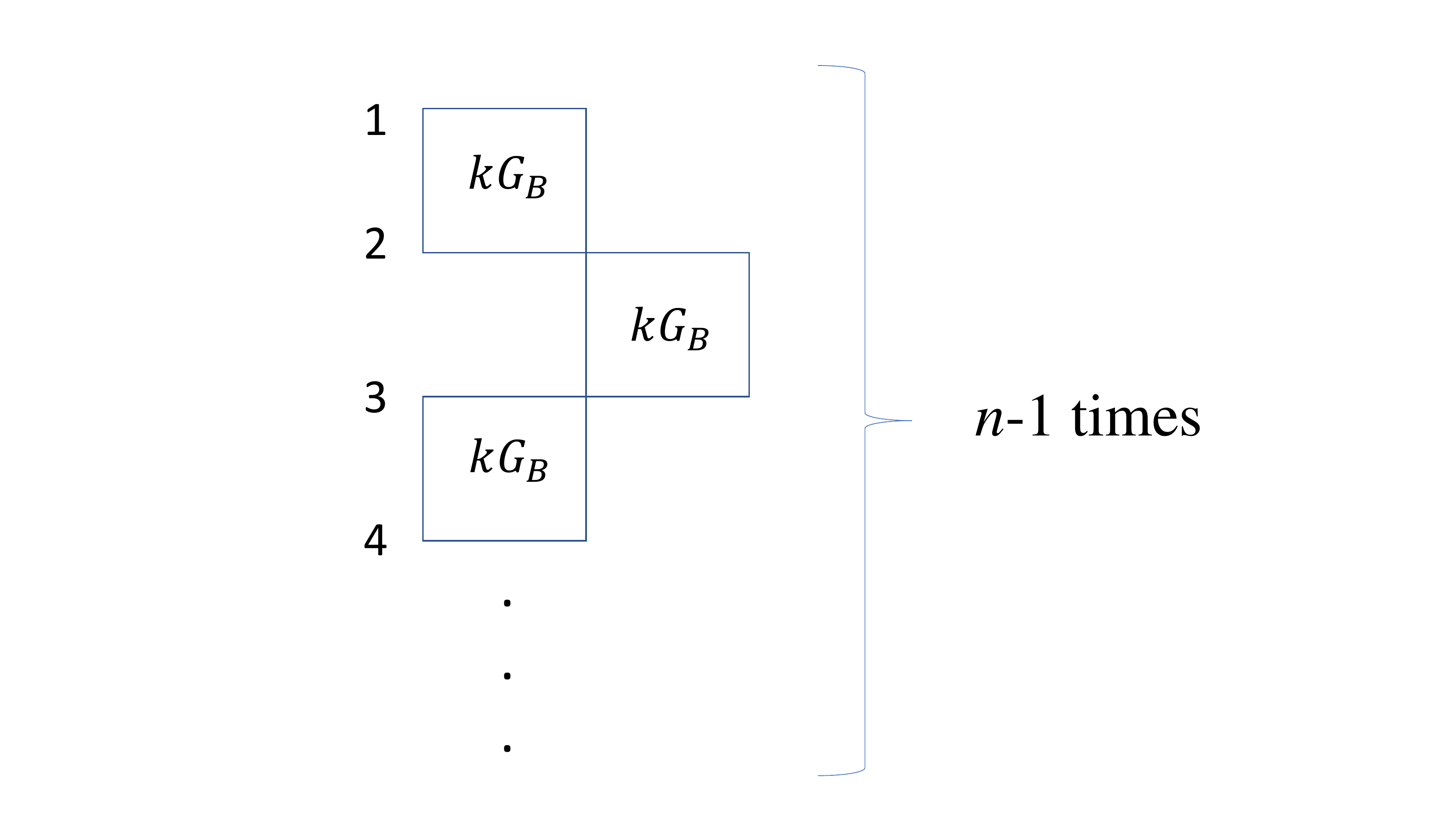}
\caption{Graph state gadget $G_{block(B^k)}$ giving rise to the random ensemble $block(B^k)$. The squares are 2-qubit gadgets $kG_B$.The empty 3 sided square means that there is no vertical entanglement.}
\label{fig2b}
\end{center}
\end{figure}
\begin{figure}[H]
\begin{center}
\graphicspath{}
\includegraphics[trim={2 0cm 200 0cm} , scale=0.3]{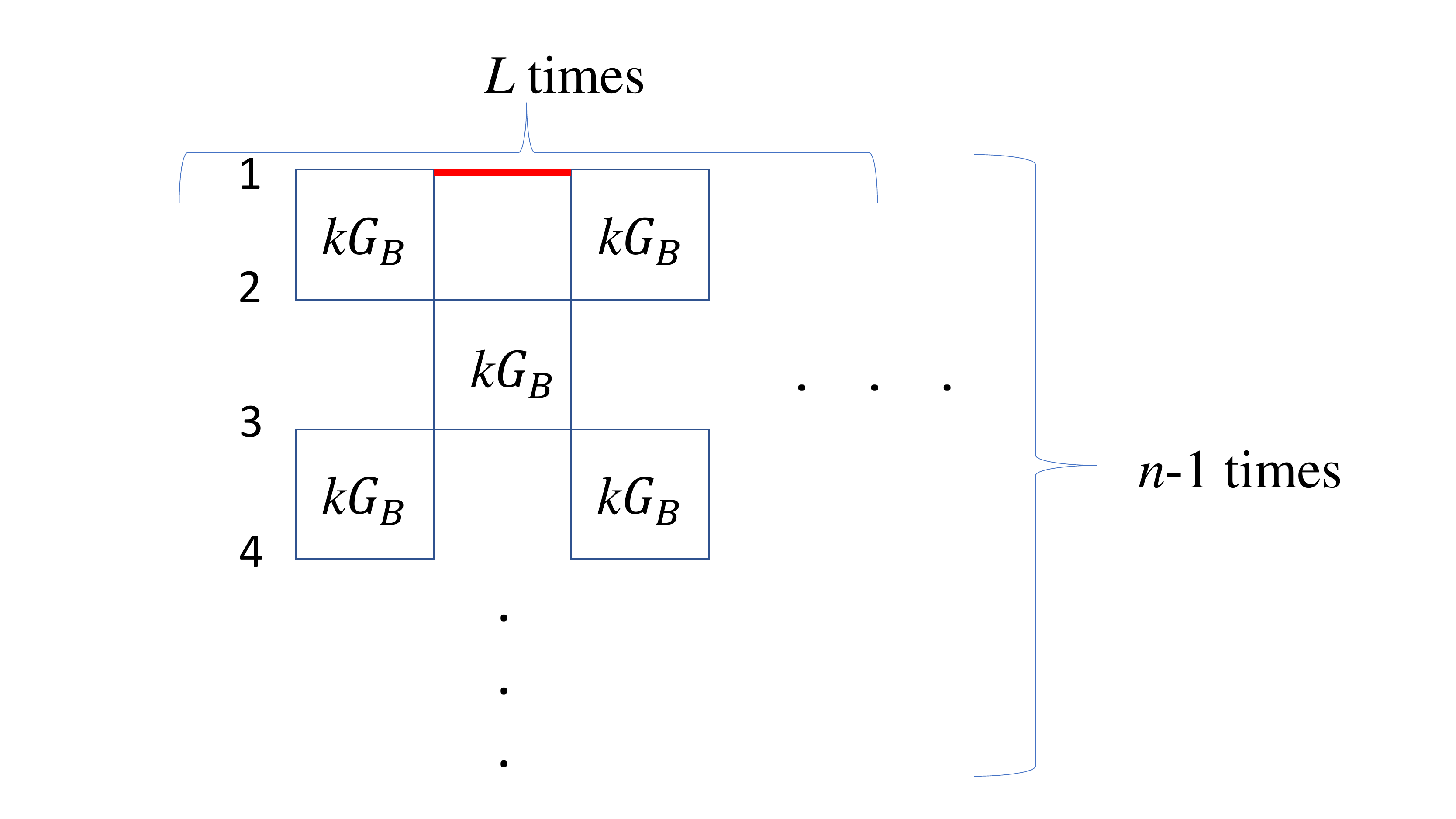}
\caption{ Graph state gadget $G_E:=LG_{block(B^k)}$, giving rise to the ensemble $E=block^{L}(B^k)$.
The horizontal red line is a preparation entanglement.}
\label{figlblock}
\end{center}
\end{figure}

\begin{figure}[H]
\begin{center}
\graphicspath{}
\includegraphics[trim={2 6cm 0 0cm} , scale=0.4]{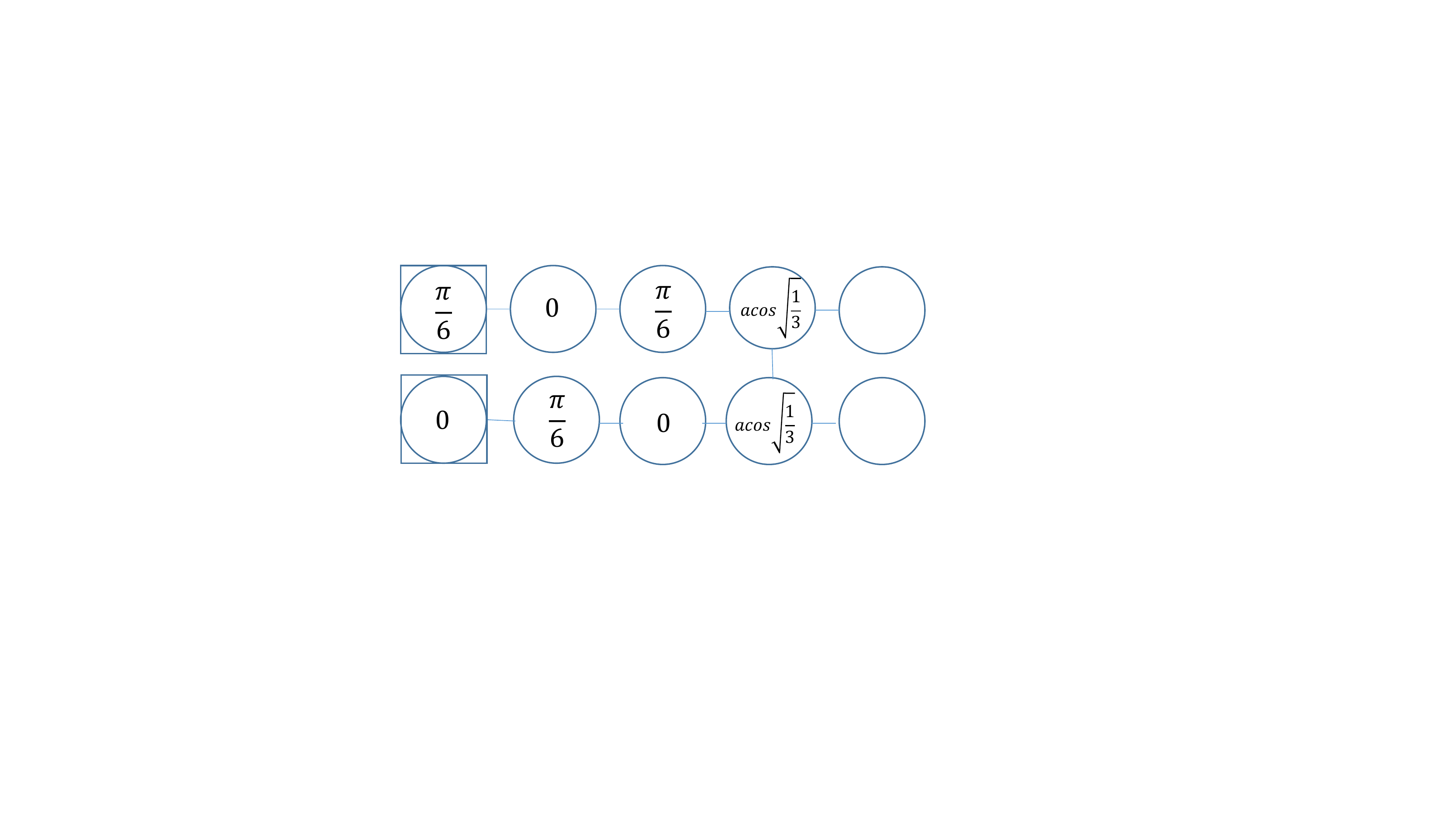}
\caption{ Graph state gadget $G_B$ giving rise to a partially invertible universal set.}
\label{fig3}
\end{center}
\end{figure}
Our next main result concerns sampling problems and quantum speedup using graph state gadgets $LG_{block(B^k)}$, Figure \ref{figlblock}.
For ease of notation, we denote $E=block^L(B^k)$, $\mathcal{U}_{block^L(\mathcal{B}^{k})}=\mathcal{U}_E$ and $G_E:=LG_{block(B^k)}$. 
Note that the total number of qubits of $G_E$ is $O(n.L.k)$, out of which $n$ qubits are identified input, and another $n$ qubits as output. 
The expressions of $L$ and $k$ are given in Theorem \ref{th1}. 
We will fix $\varepsilon_d$ to a specific value (which we will calculate in Section 5) and $t=2$, which gives $O(n.L.k)=O(n^4)$. 

Consider the sampling problem consisting of measuring the output qubits of $G_E$ in the computational basis, with the input state of $G_E$ being $|0\rangle^{\otimes n}:=|0\rangle$ and let $x$ be a bit string representing the outcomes of measurement of the output qubits of  $G_E$, and $y$ a bit string representing the outcomes of measurements performed on the non-output qubits. 
All measurements are non-adaptive, with angles defined by the graph state gadgets, and can be performed simultaneously. 
Let $$\ket{\psi}: =\prod_{i,j}CZ_{i,j}(\ket{+}^{\otimes O(n^4)-n} \otimes \ket{0}^{\otimes n}):=\prod_{i,j}CZ_{i,j}(\ket{+}^{\otimes O(n^4)-n} \otimes \ket{0})$$ denote the graph state corresponding to the graph state gadget $G_E$ before any  measurements are performed. 
This sampling gives rise to a probability distribution over $x \in \{0,1\}^{n}$ and $y \in \{0,1\}^{|V|-n}$, with $|V|=O(n^4)$ is the number of vertices in the graph state, defined by 
: 
\begin{equation}
\label{eq15}
D(x,y)=\{   p(x,y)=\\ |\langle{x,y} | \psi \rangle|^{2}=\frac{1}{2^{O(n^4)-n}} |\bra{x} U_{y} \ket{0}|^{2} \} ,
\end{equation}
where $U_{y} \in \mathcal{U}_E$, $|\mathcal{U}_E|=2^{O(n^4)-n}$,  and $\ket{x,y}=\ket{y} \otimes \ket{x}$. The relation $|\langle{x,y} | \psi \rangle|^{2}=\frac{1}{2^{O(n^4)-n}} |\bra{x} U_{y} \ket{0}|^{2}$ is obtained by noting that $\ket{\psi}=\frac{1}{\sqrt{2^{O(n^4)-n}}}\sum_{y} \ket{y} \otimes U_{y}\ket{0}$ (see Equation(\ref{eq9})), where $\ket{y}$ is a string of measurement results of non-output qubits sampling the random unitary $U_{y} \in \mathcal{U}_E $ which is  applied to the $n$-qubit input state $\ket{0}$ now teleported to the output position. 



In order to relate this to hardness, we first note that by construction our graph gadgets $G_E$ give rise to universal sets under post-selection $\mathcal{U}_E$ in $U(2^n)$ \footnote{To see this, note for large enough $k$ in $B^k$ we can generate any unitary in $U(4)$ under post-selection, because of universality of $\mathcal{U_B}$. In particular, we can generate to arbitrary accuracy the universal gate sets in \cite{univ1,univ2} for example, and SWAP's which are needed for universal quantum computation on $U(2^n)$.}.
This fact means that outputs $x$ are $\sharp$P-hard to approximate up to relative error 1/4 + O(1) in worst-case \cite{ising,bremner}. 
In the language of our MBQC gadgets, this translates to the fact that for some $U_y$ $\in$ $\mathcal{U}_E$ there exists outputs $x$ such that approximating $\langle x|U_y |0\rangle^2$ up to a relative error of 1/4 + O(1) is $\sharp$P-hard. 
This property is often referred to as worst-case $\sharp$P hardness \cite{juan,bremner} (or, for brevity, worst-case hardness), and is usually taken as a stepping stone for claiming average-case hardness conjectures of the likes of Conjecture 2 stated below. 
Hence, to obtain a working hardness proof (see Sections 1.2 and  2.3), we assume the 2 following complexity theoretic conjectures hold:
\begin{enumerate}
\item \emph{Conjecture 1}: The widely believed conjecture that the polynomial hierarchy (PH) does not collapse to its 3rd level. \cite{terhal}
\item \emph{Conjecture 2}: 	Approximating the output probabilities $\frac{1}{2^{O(n^4)-n}} |\bra{x} U_{y} \ket{0}|^{2}$  up to relative error  $\frac{1}{4}+O(1)$ for a constant fraction of unitaries $U_{y} \in \mathcal{U}_{E}$ is $\sharp$P-hard.
\end{enumerate}
\emph{Conjecture 2} seems plausible because one can relate the sampling problem  $D(x,y)$ to an IQP* sampling problem \cite{hoban}, and thus associate to it an appropriate Ising partition function \cite{ising,goldberg} . These Ising partition functions are known to be $\sharp$P-hard to approximate in worst case up to relative error $\frac{1}{4}+O(1)$  for circuits which are universal under post selection \cite{ising,goldberg,bremner,juan}. 
In this way, \emph{Conjecture 2} can be viewed as an average-case complexity conjecture on the approximation of Ising partition functions which is present in the usual hardness proofs \cite{gao,juan,juan2,bremner}. \\

We are now ready to precisely state  our second main result in the form of the following theorem:
\begin{theorem} 
\label{th2}
 Assuming conjectures 1 and 2 hold, a classical computer cannot sample from the distribution  $D(x,y)$ ( Equation (\ref{eq15})), formed from the concatenation of sampling partially invertible universal sets described above,  up to  $ l_{1}$-norm error $\frac{1}{22}$ in time $poly(n)$. \\
\end{theorem}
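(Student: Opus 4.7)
The plan is to follow the standard two-sided hardness-of-approximate-sampling template sketched in Section 2.3, specialized to the joint distribution $D(x,y)$ of the gadgets $G_E$. The proof proceeds by contradiction: suppose there is a $poly(n)$ classical algorithm $A$ producing samples $(x,y)$ from a distribution $p_A(x,y)$ with $\sum_{x,y}|p(x,y)-p_A(x,y)| \leq 1/22$. I want to use this hypothesis together with Conjecture 2 to derive a collapse of the polynomial hierarchy, which would contradict Conjecture 1.

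First I would invoke Stockmeyer's approximate counting theorem to turn $A$ into an $FBPP^{NP}$ procedure that, on input $(x,y)$ and a description of $G_E$, outputs a number approximating $p_A(x,y)$ up to additive error of size $p_A(x,y)/poly(n)$. Separately, a Markov-type averaging applied to the hypothesis $\sum_{x,y}|p-p_A| \leq 1/22$ shows that for at least, say, half of the $(x,y)$ drawn from $p$, one has $|p(x,y)-p_A(x,y)| \leq c\,p(x,y)$ with $c$ a small explicit constant of order $1/11$. Combining via the triangle inequality yields an $FBPP^{NP}$ algorithm that approximates $p(x,y)=\frac{1}{2^{|V|-n}}|\langle x|U_y|0\rangle|^2$ up to additive error which, on this large subset, is itself a small multiple of $p(x,y)$.

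The crucial step is upgrading this to a relative-error approximation of $|\langle x|U_y|0\rangle|^2$ on a constant fraction of $U_y \in \mathcal{U}_E$. Here I exploit that $block^{L}(B^k)$ is, by Theorem \ref{th1}, an $\varepsilon_d$-approximate $2$-design, so by the anti-concentration proposition (instantiated for $t=2$) a constant fraction of $U_y\in\mathcal{U}_E$ satisfy $|\langle x|U_y|0\rangle|^2 > \alpha(1-\varepsilon_d)/2^n$ for the fixed output $x$; moreover, because the marginal distribution of $y$ under the graph-state measurements is exactly uniform on $\{0,1\}^{|V|-n}$, sampling $y$ uniformly coincides with sampling $U_y$ from the design measure $\mu$. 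Fixing $\varepsilon_d$ appropriately (which by Theorem \ref{th1} only affects $k,L$ polynomially in $n$) and unpacking the constants, the subset of $U_y$ that simultaneously survives the Markov bound, the anti-concentration bound, and the Stockmeyer additive-error bound remains of size $\Omega(|\mathcal{U}_E|)$, and on this set the additive approximation becomes a relative approximation with error at most $1/4+O(1)$.

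Applying Conjecture 2 to this average-case relative-error approximation produces a $\sharp$P-hard problem solvable in $FBPP^{NP}$, so by Toda's theorem the polynomial hierarchy collapses to its third level, contradicting Conjecture 1. The main obstacle I expect is the final parameter balancing: one has to pick $\varepsilon_d$ (via the design parameters $k,L$), the anti-concentration constant $\alpha$, and the sampling tolerance in tandem so that the fractions lost to Markov averaging and to anti-concentration failures still leave a constant-fraction good set on which the $FBPP^{NP}$ approximation matches the hypothesis of Conjecture 2. The explicit numerical threshold $1/22$ in the statement should fall out of this optimization.
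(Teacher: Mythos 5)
Your proposal follows essentially the same route as the paper's proof: Stockmeyer's theorem to lift the hypothetical sampler to an $FBPP^{NP}$ additive-error estimator, a Markov averaging step, anti-concentration from the $\varepsilon_d$-approximate $2$-design property (with the uniform marginal over $y$ identifying uniform sampling of $y$ with sampling $U_y$ from the design) to upgrade to relative error $\frac{1}{4}+O(1)$ on a constant fraction of $U_y\in\mathcal{U}_E$, and then Conjecture 2 plus Toda's theorem to contradict Conjecture 1. The only cosmetic difference is where you apply Markov's inequality (over $(x,y)\sim p$ rather than uniformly, as the paper does), and your anticipated "parameter balancing" is exactly the paper's choice $\mu=\frac{1}{22}$, $\delta=\alpha=\varepsilon_d\approx 0.1132$ yielding a $\approx 0.28$ fraction.
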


Our last analytical contribution concerns the universality of sets associated with random unitary ensembles generated by non-adaptive fixed  $XY$ angle measurements on cluster states. 
As seen in \cite{mantri,peterdamian,mezher} and for example in Figure \ref{fig3}, non adaptive fixed $XY$ angle measurements on cluster states suffice for generating random unitary ensembles $\{p_i,U_i \in \mathcal{U}\}$, with $\mathcal{U}$ universal in $U(2^n)$. 
Here we show that this universality is $generic$, meaning that almost any assignment of non-adaptive fixed $XY$ angle measurements on cluster states gives random unitary ensembles with support on universal gate sets $\mathcal{U} \in U(2^n)$, when $n=2^\gamma$, where $\gamma$ is a positive integer. 

Our starting point is the random unitary ensemble,
 \begin{equation}
\label{eq8}
 CGEN=\{\frac{1}{2^{n}},  CZ_{1,2}...CZ_{n-1,n}(HZ(\alpha_{1}+m_1\pi ) \otimes....\otimes HZ(\alpha_{n}+m_n\pi)) \},
\end{equation}
with $m_i \in \{0,1\}$. We show that this is an $(\eta <1,t)$-tensor product expander (TPE)  \cite{harrow2des,wintonviola,mezher,hastings}, meaning that (see Equation (\ref{eq3}) )
\begin{equation}
\label{eq gentpe}
||M_t[\mu_{CGEN}]-M_{t}[\mu_{H}]||_{\infty} \leq \eta < 1.
\end{equation}

$CGEN$ in Equation (\ref{eq8}) can be generated by an $n$-row, 2-column cluster state with $n$ output qubits-the last column is the (unmeasured output), and with $n$ $XY$ plane measurement angles $\alpha_i$, 
see Figure \ref{figgen}. 
We denote the set $\mathcal{U_{CGEN}}=\{CZ_{1,2}...CZ_{n-1,n}(HZ(\alpha_{1}+m_1\pi ) \otimes....\otimes HZ(\alpha_{n}+m_n\pi))\}$.
As seen in \cite{harrow2des,wintonviola}, showing that Equation (\ref{eq gentpe}) holds amounts to showing that the set $\mathcal{U_{CGEN}}$ is a  universal set in $U(2^{n})$ \cite{lloyd,barenco,brylinski}.  Our result about the universality of $\mathcal{U_{CGEN}}$ can be summarized in the following theorem.

\begin{theorem} 
\label{th3}
 $\mathcal{U_{CGEN}}$ is a universal set in $U(2^{n})$ for almost all choices of $\alpha_{1},...,\alpha_{n}$, for $n=2^{\gamma}$, where $\gamma$ is a positive integer.
\end{theorem}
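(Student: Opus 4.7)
The plan is to reduce universality to a real-analytic non-degeneracy condition on the angle vector and then to show that the condition is non-vacuous by exhibiting one explicit angle assignment that satisfies it; the measure-zero conclusion is then automatic.

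\textbf{Step 1 (Lie-algebraic reduction).} Following the classical criterion of Brylinski together with the observations of Lloyd and Barenco, $\langle \mathcal{U_{CGEN}}\rangle$ is dense in $U(2^n)$ iff the real Lie algebra $\mathfrak{g}(\alpha)\subseteq\mathfrak{u}(2^n)$ generated under commutators by the skew-Hermitian logarithms of the elements of $\mathcal{U_{CGEN}}$ equals all of $\mathfrak{u}(2^n)$. Each element of $\mathcal{U_{CGEN}}$ factorizes as $V\cdot W(m,\alpha)$ with the fixed Clifford $V=CZ_{1,2}\cdots CZ_{n-1,n}$ and the local-rotation block $W(m,\alpha)=\bigotimes_i HZ(\alpha_i+m_i\pi)$, so its logarithm depends real-analytically (in fact via trigonometric polynomials) on $\alpha=(\alpha_1,\ldots,\alpha_n)$.

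\textbf{Step 2 (Genericity).} Full-dimensionality of $\mathfrak{g}(\alpha)$ is equivalent to the non-vanishing of at least one $4^n\times 4^n$ minor of the matrix whose columns collect iterated commutators of the generators up to some finite depth (finite suffices because $\mathfrak{u}(2^n)$ is finite-dimensional). Each such minor is a real-analytic function of $\alpha$, so the failure set $\mathcal{B}:=\{\alpha:\dim\mathfrak{g}(\alpha)<4^n\}$ is a closed real-analytic subset of $[0,2\pi)^n$. Either $\mathcal{B}=[0,2\pi)^n$, or $\mathcal{B}$ is a proper real-analytic subvariety and therefore has Lebesgue measure zero. Hence the theorem reduces to producing one $\alpha^{\star}\notin\mathcal{B}$.

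\textbf{Step 3 (A seed assignment by doubling on $n=2^\gamma$).} The seed $\alpha^{\star}$ is constructed inductively on $\gamma$. For the base case $\gamma=1$ ($n=2$), I verify by direct commutator bookkeeping that for rationally independent choices of $\alpha_1,\alpha_2$ irrational in $\pi$, the four generators Lie-generate all sixteen dimensions of $\mathfrak{u}(4)$: the $m=0$ and $m_i=1$ branches differ by conjugation-equivalents of local $X$-gates, and combined with the entangling $CZ$ this yields a Clifford skeleton whose iterated brackets with the irrational rotation parts complete to the full algebra. For the inductive step $\gamma\mapsto\gamma+1$ I exploit the power-of-two structure: split the $2^\gamma$ qubits into two halves of $2^{\gamma-1}$ qubits on which the inductive hypothesis supplies sub-algebras $\mathfrak{u}(2^{\gamma-1})\otimes I$ and $I\otimes\mathfrak{u}(2^{\gamma-1})$, then take commutators against the single bridging $CZ$ coupling the two halves to pump these sub-algebras up to the full $\mathfrak{u}(2^\gamma)$.

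\textbf{Expected main obstacle.} The delicate step is Step 3: one must check that the commutators across the bridging $CZ$ genuinely fill in the off-block-diagonal components of $\mathfrak{u}(2^\gamma)$ rather than producing an accidental proper subalgebra, and that $W(m,\alpha)$ together with $V$ supplies enough non-commutativity in the seed case. The power-of-two hypothesis enters precisely here, making the halving symmetry of the recursion compatible with the $CZ$-chain structure of the cluster state. Once one good $\alpha^{\star}$ is secured, Step~2 upgrades it at once to the claim that universality holds for almost every angle assignment.
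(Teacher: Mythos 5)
Your overall architecture (Lie-algebraic criterion, genericity of the angle assignment, doubling induction on $n=2^\gamma$ with the bridging $CZ$) matches the paper's, but the two places you flag as ``the delicate step'' are precisely where the paper's actual content lives, and your proposal does not supply the ingredient that resolves them. The paper closes both the base case and the inductive step with Lemma~6.1 of Brylinski--Brylinski: there is \emph{no intermediate Lie algebra} between $\mathrm{Lie}(U(d)\otimes U(d))$ and $\mathrm{Lie}(U(d^2))$. Concretely, it first shows by explicit computation that $HZ(\alpha)$ is a generic element of $U(2)$ for $\alpha$ irrationally related to $\pi$ (so the local parts generate $\mathrm{Lie}(U(2)\otimes U(2))$), then observes that the generators do not lie in $\mathrm{Lie}(U(2)\otimes U(2))$ because of the entangling $CZ$, and maximality forces the generated algebra to be all of $\mathrm{Lie}(U(4))$; the step $\gamma\mapsto\gamma+1$ is the same argument with $d=2^{n/2}$, writing each element as $CZ_{n/2,n/2+1}(A\otimes B)$. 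Without this maximality lemma (or a substitute), your ``commutator bookkeeping'' in the seed case and your claim that bracketing against the bridging $CZ$ ``pumps up'' to the full algebra are unproved assertions, and ruling out an accidental proper subalgebra is exactly the hard part.

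There is a second, independent problem in your Step~1: the stated ``iff'' is false in the ``if'' direction. The Lie algebra generated by the logarithms of the group elements can be all of $\mathfrak{u}(2^n)$ while the generated group is finite (e.g.\ $\log X/i$ and $\log Z/i$ bracket-generate $\mathfrak{u}(2)$, yet $\langle X,Z\rangle$ is the finite Pauli group). Density additionally requires that (powers of) the generators have eigenvalues whose arguments are irrationally related to $\pi$, so that the closure of each cyclic subgroup contains the corresponding one-parameter subgroup --- this is the Lloyd/Deutsch--Barenco condition the paper verifies explicitly. This also undercuts your Step~2: the failure set is not captured by the vanishing of real-analytic minors alone, since the set of angle vectors violating the eigenvalue-irrationality condition is dense (though measure zero) and invisible to the subvariety argument. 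The real-analyticity idea is a nice genericity mechanism in principle, but as written it proves the wrong statement; you would still need to handle the arithmetic condition on eigenvalues separately, as the paper does.
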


Two immediate corrolaries follow from Theorem \ref{th3} and the results of \cite{wintonviola,harrow2des}.

\begin{corollary}
CGEN is an $(\eta <1,t)$-TPE for almost all choices of $\alpha_{1},...,\alpha_{n}$.
\end{corollary}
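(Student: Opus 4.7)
The plan is to combine Theorem \ref{th3} with Proposition \ref{prp2} in a direct way. By Theorem \ref{th3}, for almost all choices of $\alpha_1,\dots,\alpha_n$ (i.e.\ on the complement of a set of Lebesgue measure zero in $[0,2\pi)^n$), the set $\mathcal{U_{CGEN}}$ is universal in $U(2^n)$. By construction, the ensemble $CGEN$ from Equation (\ref{eq8}) assigns uniform weight $1/2^n$ to each of the $2^n$ elements obtained by varying $(m_1,\dots,m_n)\in\{0,1\}^n$, so the associated probability measure $\mu_{CGEN}=\sum_{U_i\in\mathcal{U_{CGEN}}}\frac{1}{2^n}\delta_{U_i}$ has full support on $\mathcal{U_{CGEN}}$ (every element is selected with non-zero probability).

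Given this, I would invoke Proposition \ref{prp2} directly with $\mathcal{U}=\mathcal{U_{CGEN}}$ and $\mu=\mu_{CGEN}$. That proposition yields, for any positive integer $t$,
\begin{equation}
\|M_t[\mu_{CGEN}]-M_t[\mu_H]\|_\infty < 1,
\end{equation}
which is exactly the definition (Definition \ref{deftpe}) of $CGEN$ being an $(\eta,t)$-TPE with some $\eta=\eta(t,\alpha_1,\dots,\alpha_n)<1$. This establishes the claim.

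There is no real obstacle, since the corollary is essentially an immediate bridge between universality (Theorem \ref{th3}) and the TPE property through the off-the-shelf statement of Proposition \ref{prp2}, which was precisely cited from \cite{harrow2des,wintonviola} for this purpose. The only point that warrants a brief sentence of justification is the support condition on $\mu_{CGEN}$; one should note that the $2^n$ Pauli by-products $m_i\pi$ act as genuinely distinct offsets on the angles, so none of the uniform weights collapse, and hence $\mu_{CGEN}(U)\neq 0$ for every $U\in\mathcal{U_{CGEN}}$. With that observation in place, the ``almost all'' qualifier is inherited verbatim from Theorem \ref{th3}, and no further work is needed.
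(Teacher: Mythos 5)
Your proof is correct and follows exactly the route the paper intends: Theorem 3 supplies universality of $\mathcal{U_{CGEN}}$ for almost all angles, and Proposition 2 (cited from \cite{harrow2des,wintonviola}) converts full support on a universal set into $\|M_t[\mu_{CGEN}]-M_t[\mu_H]\|_\infty<1$, which is the $(\eta<1,t)$-TPE property. The paper treats this as an immediate corollary with no further argument, so your additional remark about the support of the uniform measure is a harmless (and trivially satisfied) bit of extra care.
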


\begin{corollary}
$CGEN^{k}$ is an $\varepsilon$-approximate $t$-design for almost all choices of $\alpha_{1},...,\alpha_{n}$, and sufficiently large $k$.
\end{corollary}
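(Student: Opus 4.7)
The plan is to chain together Theorem~\ref{th3}, Proposition~\ref{prp2}, and Proposition~\ref{prop1}, since Corollary~4 is essentially a bookkeeping consequence of these three results. First I would invoke Theorem~\ref{th3}: for almost all choices of angles $(\alpha_1,\ldots,\alpha_n)$ (i.e.\ outside a set of zero Lebesgue measure), the set $\mathcal{U_{CGEN}}$ is universal in $U(2^n)$. Since $CGEN$ is the uniform ensemble placing probability $1/2^n$ on every element of $\mathcal{U_{CGEN}}$, the associated measure $\mu_{CGEN}$ has nonzero weight on every element of a universal set, which is exactly the hypothesis required by Proposition~\ref{prp2}.

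Next I would apply Proposition~\ref{prp2} verbatim to obtain
\begin{equation}
\|M_t[\mu_{CGEN}]-M_t[\mu_H]\|_\infty \;=:\;\eta(t)\;<\;1
\end{equation}
for every positive integer $t$. This is exactly the $(\eta,t)$-TPE condition of Definition~\ref{deftpe}, so $CGEN$ is an $(\eta(t),t)$-TPE for the corresponding almost-all set of angles; this is the content of Corollary~3 and is the only nontrivial input.

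Finally I would feed this TPE into Proposition~\ref{prop1}. Recalling that the moment superoperator of $CGEN^k$ is $(M_t[\mu_{CGEN}])^k$, Proposition~\ref{prop1} implies that $CGEN^k$ is an $\varepsilon$-approximate $t$-design in the strong sense of Definition~\ref{df2} as soon as
\begin{equation}
k\;\geq\;\frac{1}{\log_2(1/\eta(t))}\bigl(4nt+\log_2(1/\varepsilon)\bigr),
\end{equation}
which is finite because $\eta(t)<1$. Since the angle-dependent set on which all steps succeed is exactly the one supplied by Theorem~\ref{th3} (the subsequent steps are deterministic consequences of universality), the ``almost all'' qualifier is preserved, giving the Corollary.

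The only honest obstacle here is whether Theorem~\ref{th3} truly delivers an almost-sure universality statement and whether the bound on $k$ needs to be uniform in the angles; it does not — Proposition~\ref{prop1} only requires $\eta(t)<1$ pointwise, and $k$ is allowed to depend on the chosen angles (through $\eta$) and on $t$. So modulo Theorem~\ref{th3} there is no additional analytic difficulty, and the proof reduces to the short chain of implications above.
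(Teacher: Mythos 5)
Your proposal is correct and follows exactly the route the paper intends: Theorem~3 gives almost-sure universality, Proposition~2 then yields the $(\eta,t)$-TPE property (Corollary~3), and Proposition~1 converts the $k$-fold concatenation into an $\varepsilon$-approximate $t$-design for $k\geq \frac{1}{\log_2(1/\eta)}(4nt+\log_2(1/\varepsilon))$. Your remark that $k$ may depend on the angles through $\eta$ and need not be uniform is a sensible clarification consistent with the paper's "sufficiently large $k$" phrasing.
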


$CGEN^{k}$ can be easily seen to generated by an $n$ row, $k+1$ column cluster state, with measurement angles $\alpha_i$, as illustrated in Figure \ref{fig7}.

A particularly interesting observation is the case when $\gamma=1$. The result of Theorem \ref{th3} in this case says that almost any 2-qubit cluster state gadgets $G_B$ generate random unitary ensembles $B$,  with universal sets $\mathcal{U_B} \subset U(4)$ \footnote{This is not surprising, since it was shown in \cite{lloyd,barenco} that almost any 2-qubit gate is universal for quantum computing.}. Where $\mathcal{U_B}$ can be  invertible, partially invertible, or non-invertible \footnote{ We mean by non-invertible that for all $U \in \mathcal{U_B}$ , $U^{\dagger} \notin \mathcal{U_B}$; We mean by invertible that for all $U \in \mathcal{U_B}$, $U^{\dagger} \in \mathcal{U_B}$ }. What remains in order to obtain \emph{efficient} $t$-designs is to show that the moment superoperator $M_t[\mu_B]$ of $B$ has a subdominant (second largest) eigenvalue $\lambda$,  and  $Conjecture$ $A$:  $|\lambda|$ does not scale badly (inefficiently) with $t$.  If $Conjecture$ $A$ is true, then we can apply the techniques  we used in  Theorem \ref{th1} to show that we can construct  $n$-qubit cluster state gadgets $LG_{block(B^k)}$ which sample from $t$-designs for efficient $L$ and $k$ from almost all 2-qubit cluster state gadgets $G_B$. Then, as a consequence of Theorem \ref{th2}, these $n$-qubit cluster state gadgets can be used in quantum speedup proposals. 
\begin{figure}[H]
\begin{center}
\graphicspath{}
\includegraphics[trim={10 4cm 100 0cm} , scale=0.3]{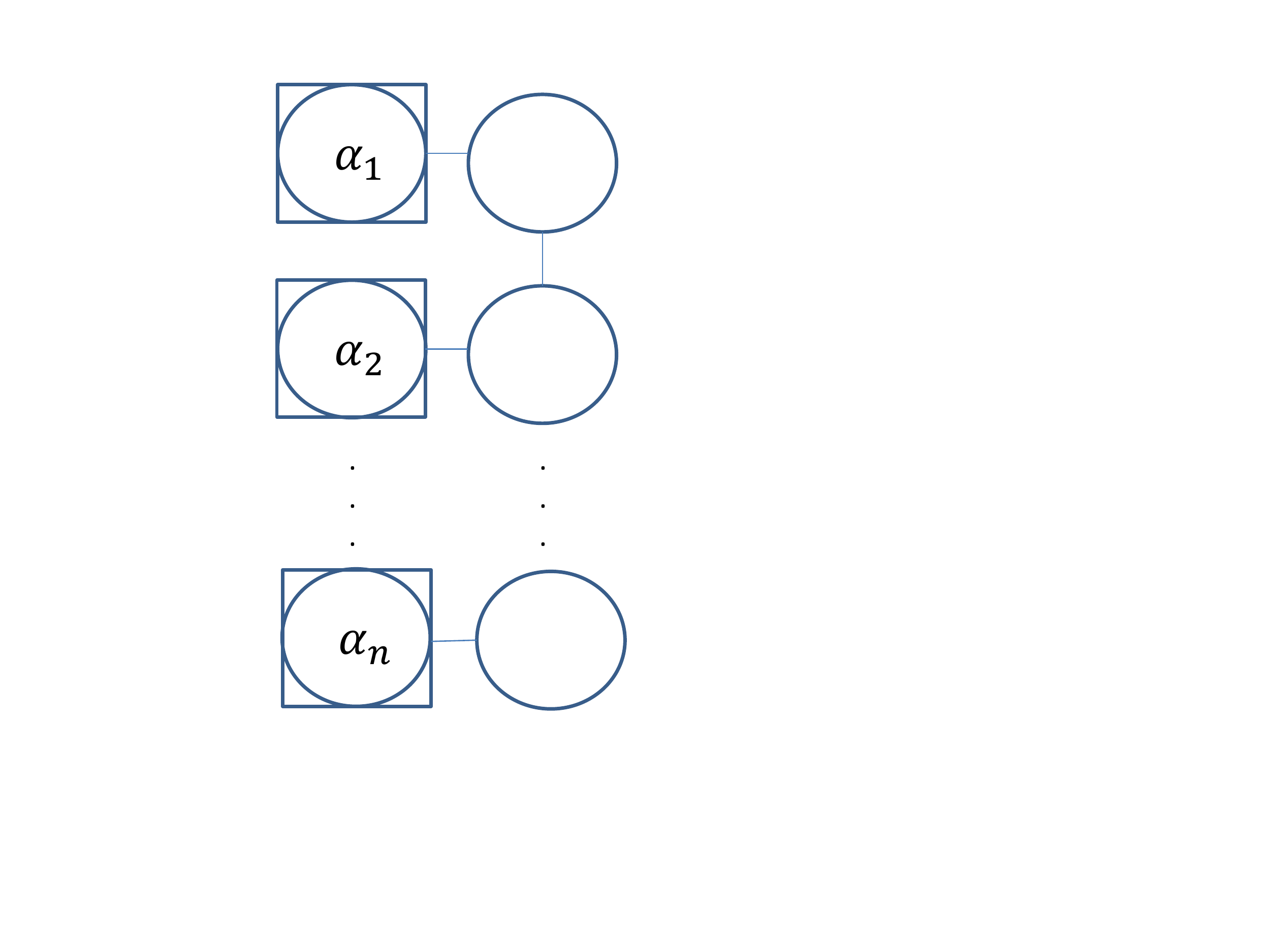}
\caption{ Cluster state gadget generating $CGEN$. Corollary 3 states that almost any choice of measurement angles $\alpha_{1},..\alpha_{n}$ give rise to a TPE.}
\label{figgen}
\end{center}
\end{figure}

\begin{figure}[H]
\begin{center}
\graphicspath{}
\includegraphics[trim={2 0cm 0 0cm} , scale=0.3]{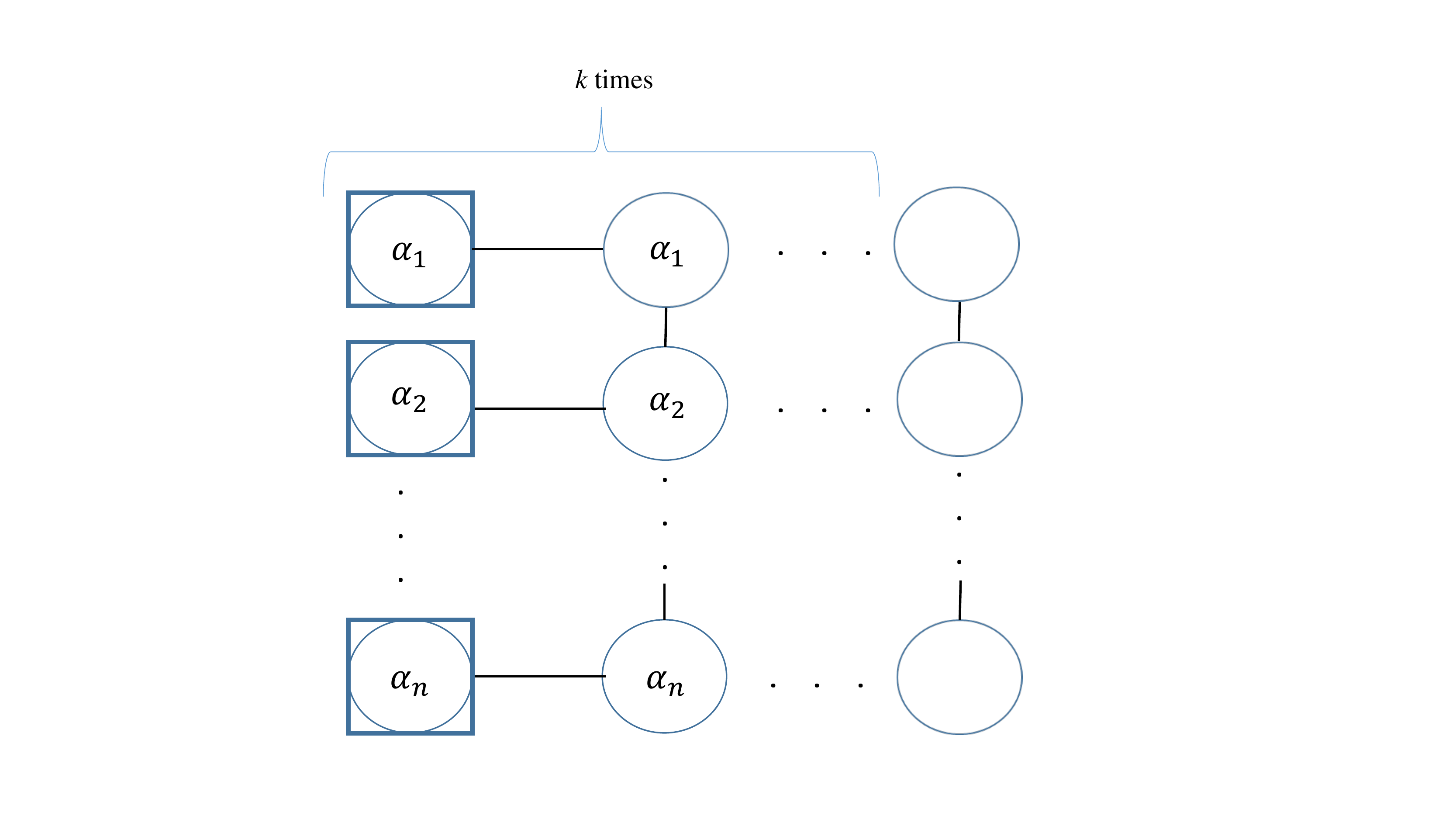}
\caption{ Graph gadget giving rise to $CGEN^k$. Corollary states that almost any choice of measurement angles $\alpha_{1},..\alpha_{n}$ give rise to a $t-design$. Numerics suggest this is also an efficient construction.}
\label{fig7}
\end{center}
\end{figure}

Concerning $Conjecture$ $A$, we performed numerical calculations on linear cluster states composed of 3 qubits, and on 2-row, 2-column cluster states like those of Figure \ref{fig1}. The random unitary ensembles of the 3  qubit linear cluster state have the form $\{\frac{1}{4},HZ^mZ(\alpha)HZ^{m^{'}}Z(\alpha)\in U(2) \}$ where $m,m^{'} \in \{0,1\}$. These random ensembles are generated by measuring two of the qubits of the linear cluster state at an angle $\alpha$ in the XY plane. The random unitary ensembles corresponding to the 2-row, 2- column cluster states have the form of Equation (\ref{eq10}), and are generated by XY plane measurements performed as in Figure \ref{fig1}. The numerics are based on calculating the subdominant eigenvalue $|\lambda|$ of the moment superoperator (see Definition \ref{defmomentsuperop}) corresponding to each of the above random unitary ensembles, for various values of $t$, and for various choices of the XY plane measurement angles. For the 3 qubit linear cluster states the values of $t$ tested were $t=2,3,4,5$, and for the 2-row, 2-column cluster states we tested for $t=2,3$. Beyond these values the numerical investigation becomes unfeasible as our numerical algorithms scale exponentially with $n$ and $t$.  \footnote{ Note that in the $t=1$ case, we obtained exact 1-designs ($|\lambda|=0$) for both linear cluster states and 2-row, 2-column cluster states. This is in line with numerical calculations performed in \cite{peterdamian}.}. For all the choices of fixed angle, non-adaptive $XY$ measurements  tested, we found that the subdominant eigenvalue $|\lambda|$ was independent of $t$ for $t=2,3$ for both the 3 qubit linear cluster states and the 2-column, 2-row cluster states, which is in line with calculations in \cite{CHM+13}. On the other hand, for the 3 qubit linear cluster states some angles tested showed a $|\lambda|$ $independent$ of $t$ for $t=2,3,4,5$, which is in line with the result of \cite{bourgaingambourd}, other angles showed that $|\lambda|$ changes values from $t=3$ to $t=4$, but remains the same for $t=4$ and $t=5$. These numerical calculations seem to confirm $Conjecture$ $A$. (see Appendix for further discussion)
As a final remark, note that in our numerics we assume $\eta \sim |\lambda|$ (see Definition \ref{deftpe}) for moment superoperators of random ensembles defined on universal sets $\mathcal{U}$. We mean by this that the rate at which $t$-designess is attained is determined asymptotically by $|\lambda|$. This is indeed true, and common practice, when this moment superoperator is Hermitian and, more importantly, diagonalizable \cite{wintonviola}. This corresponds to the case when $\mathcal{U}$ contains unitaries and their inverses. However, this approximation also works for general moment superoperators $M_t[\mu]$, namely because the set of diagonalizable square N by N matrices is \emph{dense} in the set of N by N square matrices \cite{linear}. This means that any non-diagonalizable  $M_t[\mu]$ is arbitrarily close in norm to a diagonal matrix, and in particular their eigenvalues are arbitrarily close.

\section{Proof of Theorems}
\subsection{Proof of Theorem \ref{th1}}
We begin by proving the following Lemma regarding the emsemble $B$ which samples from the partially invertible set $\mathcal{U_B}$ (see Equation (\ref{eqpartinv})).
\begin{lemma}
\label{lem1}
 B is an $(\eta,t)$-TPE with $\eta=1+(C-1)a <1$ where $0<C<1$, and $a=\frac{|\mathcal{U_M}|}{|\mathcal{U_B}|}$.\\
\end{lemma}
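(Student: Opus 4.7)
The plan is to exploit the partially invertible structure of $\mathcal{U_B}$ via a convex decomposition. First I would split the uniform measure on $\mathcal{U_B}$ as $\mu_B = a\,\mu_M + (1-a)\,\mu_{B/M}$, where $\mu_M$ and $\mu_{B/M}$ are the uniform probability measures on $\mathcal{U_M}$ and its complement respectively, and $a = |\mathcal{U_M}|/|\mathcal{U_B}|$. Since the map $\mu \mapsto M_t[\mu]$ is linear, this yields $M_t[\mu_B] = a\,M_t[\mu_M] + (1-a)\,M_t[\mu_{B/M}]$, and because the constants $a$ and $1-a$ sum to $1$, I can absorb $M_t[\mu_H]$ into both branches and apply the triangle inequality:
$$\|M_t[\mu_B]-M_t[\mu_H]\|_\infty \leq a\,\|M_t[\mu_M]-M_t[\mu_H]\|_\infty + (1-a)\,\|M_t[\mu_{B/M}]-M_t[\mu_H]\|_\infty.$$

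For the first term, I would invoke the fact that $\mathcal{U_M}$ satisfies precisely the Brandao--Harrow--Horodecki hypotheses (it is universal in $U(4)$, closed under inversion, and has algebraic entries), so their results, combined with Proposition \ref{prp2}, give a constant $0<C<1$ (depending on $\mathcal{U_M}$ and $t$) such that $\|M_t[\mu_M]-M_t[\mu_H]\|_\infty \leq C$. For the second term, I would use the structural fact that $M_t[\mu_H]$ is the orthogonal projector onto the commutant of the $t$-fold action, so $M_t[\mu_{B/M}]\,M_t[\mu_H] = M_t[\mu_H]$ for any probability measure. Rewriting
$$M_t[\mu_{B/M}] - M_t[\mu_H] = M_t[\mu_{B/M}]\bigl(I - M_t[\mu_H]\bigr),$$
submultiplicativity of the operator norm, together with $\|M_t[\mu_{B/M}]\|_\infty \leq 1$ (as a convex combination of unitaries) and $\|I - M_t[\mu_H]\|_\infty = 1$, yields $\|M_t[\mu_{B/M}] - M_t[\mu_H]\|_\infty \leq 1$.

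Combining the two bounds gives $\eta \leq aC + (1-a) = 1 + (C-1)a$, which is strictly less than $1$ since $a>0$ and $C<1$, as required.

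The main subtle point is that I do not try to extract a quantitative bound $<1$ from $\mathcal{U_{B/M}}$ itself — Proposition \ref{prp2} only guarantees strict inequality without an effective constant when the set need not contain inverses or have algebraic entries, which is precisely the regime we want to accommodate. Using the crude bound $\leq 1$ is therefore not wasteful: the entire gap from $1$ in the final estimate is supplied by the invertible, algebraic subset $\mathcal{U_M}$ through the factor $(C-1)a$, and the universality of $\mathcal{U_{B/M}}$ is used only implicitly to ensure that the decomposition of $\mathcal{U_B}$ into $\mathcal{U_M}$ and $\mathcal{U_{B/M}}$ is meaningful in the partially invertible framework of the theorem.
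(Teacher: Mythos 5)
Your proof is correct and follows the same skeleton as the paper's: the convex split $\mu_B = a\,\mu_M + (1-a)\,\mu_{B/M}$, linearity of $\mu \mapsto M_t[\mu]$, the triangle inequality, the Bourgain--Gamburd spectral gap $\|M_t[\mu_M]-M_t[\tilde\mu_H]\|_\infty \leq C$ for the invertible algebraic subset, and the crude bound $\leq 1$ on the complement. The one place you genuinely diverge is in justifying that crude bound: the paper cites Proposition \ref{prp2} (universality of $\mathcal{U_{B/M}}$ gives strict inequality, of which only $\leq 1$ is retained), whereas you derive $\leq 1$ from $M_t[\mu_{B/M}]M_t[\tilde\mu_H]=M_t[\tilde\mu_H]$ plus submultiplicativity, which holds for \emph{any} probability measure on $U(4)$. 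Your route is arguably cleaner and makes explicit that the universality of $\mathcal{U_{B/M}}$ plays no role in this lemma (it is needed elsewhere, e.g.\ for universality under post-selection in Theorem \ref{th2}, not for the TPE bound); the paper's route keeps the hypothesis visible but uses none of its strength. One correction: you write that $C$ may depend on $t$. The Bourgain--Gamburd result gives $C$ \emph{independent} of $t$ (this is the entire content of their theorem beyond Proposition \ref{prp2}), and this independence is essential downstream --- if $C\to 1$ as $t$ grows, the value of $k$ in Theorem \ref{th1} would blow up and the $O(n^3t^{12})$ efficiency claim would fail. A second minor omission: Bourgain--Gamburd is stated for $SU(4)$, and the paper records a short determinant-rescaling argument to pass from $U(4)$; your proof should at least acknowledge this step.
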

\begin{proof}
$$M_t[\mu_B]=\sum_{i=\{1,...,|\mathcal{U_B}|\}}\frac{1}{|\mathcal{U_B}|}U_i^{\otimes t,t}=aM_t[\mu_M]+(1-a)M_t[\mu_{B/M}]$$ \\ where $$M_t[\mu_M]=\sum_{i=\{1,...,|\mathcal{U_M}|\}}\frac{1}{|\mathcal{U_M}|}U_i^{\otimes t,t}, U_i \in \mathcal{U_M},$$   and $$M_t[\mu_{B/M}]=\sum_{i=\{1,...,|\mathcal{U_{B/M}}|\}}\frac{1}{|\mathcal{U_{B/M}}|}U_i^{\otimes t,t}, U_i \in \mathcal{U_{B/M}}.$$
Since, by our definition of a partially invertible universal set, $\mathcal{U_{B/M}}$ is universal in $U(4)$, meaning by Proposition \ref{prp2}, that \cite{harrow2des}
\begin{equation}
\label{eq1th1}
||M_t[\mu_{B/M}]-M_t[\tilde \mu_H]||_\infty \leq 1 
\end{equation}
where $\tilde \mu_H$ is the Haar measure on $U(4)$ (as opposed to $\mu _H$ in Equation (\ref{eq1}) which refers to the Haar measure over $U(2^n)$), and $M_t[\tilde \mu_H]=\int_{U(4)} U^{\otimes t,t} \tilde \mu_H(dU)$. Furthermore, $M_t[\mu_M]$ is the moment superoperator associated to a random ensemble sampling uniformly from a universal set in $U(4)$ having unitaries with algebraic entries \footnote{In \cite{bourgaingambourd}, one  requires sampling from SU(4). Fortunately, the moment super operator of a set sampled from U(4) can always be thought of as a sampling from SU(4). This can be seen by noting that for all $U$ $\in$ U(4) we have $det(U)$ $\neq$ 0, hence $U^{\otimes t,t }$=$|det(U)|^{\dfrac{t}{2}}$.$U^{ ' \otimes t,t}$=$U^{ ' \otimes t,t}$,  where $U^{'}$ $\in$ SU(4).}, and which contains inverses, $M=\{\frac{1}{|\mathcal{U_M}|},U_{i} \in \mathcal{U_M}\}$. Then, from the result of \cite{bourgaingambourd}, there is a constant $0 <C <1$ independent of $t$ such that the following relation holds 
\begin{equation}
\label{eq5th1}
||M_t[\mu_{M}]-M_t[\tilde \mu_H]||_\infty \leq C. \\
\end{equation}
Now, 
$$||M_t[\mu_B]-M_t[\tilde \mu_H]||_\infty = ||aM_t[\mu_{M}]-aM_t[\tilde \mu_H]+(1-a)M_t[\mu_{B/M}]-(1-a)M_t[\tilde \mu_H]||_\infty ,$$ 
thus
\begin{equation}
\label{eq6th1}
||M_t[\mu_B]-M_t[\tilde \mu_H]||_\infty \leq a||M_t[\mu_{M}]-M_t[\tilde \mu_H]||_\infty +(1-a) ||M_t[\mu_{B/M}]-M_t[\tilde \mu_H]||_\infty=\eta.
\end{equation}
Replacing Equations (\ref{eq5th1}) and (\ref{eq1th1}) in Equation (\ref{eq6th1}) allows to obtain the desired value of $\eta$. \\
\end{proof}

Using Proposition \ref{prop1} and Lemma \ref{lem1} we have the direct corollary concerning the $k$-fold concatenation of $B$, denoted by $B^k$ (see Equation (\ref{eqbk})).
\begin{corollary}
\label{cortdes}
$B^k$ is a $\varepsilon$-approximate $t$-design in $U(4)$ for $k \geq \frac{1}{log_{2}(\frac{1}{1+(C-1)a})}(8t+log_{2}(\frac{1}{\varepsilon})).$
\end{corollary}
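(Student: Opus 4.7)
The plan is to combine Lemma~\ref{lem1} with Proposition~\ref{prop1} directly, since the corollary statement is essentially the specialization of Proposition~\ref{prop1} to the two-qubit ensemble $B$ and the particular TPE constant $\eta$ obtained in Lemma~\ref{lem1}.

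First I would invoke Lemma~\ref{lem1} to assert that $B$ is an $(\eta,t)$-TPE with $\eta = 1 + (C-1)a$, which satisfies $\eta < 1$ because $0 < C < 1$ and $0 < a \le 1$. Then I would feed this $\eta$ into Proposition~\ref{prop1}, being careful about one small bookkeeping point: Proposition~\ref{prop1} is stated for ensembles in $U(2^n)$, and the bound on $k$ there contains the factor $4nt$. In our setting the ensemble $B$ lives in $U(4) = U(2^2)$, so one must set $n=2$ in that bound, which turns the $4nt$ term into $8t$. This immediately yields
\begin{equation*}
k \;\ge\; \frac{1}{\log_2\!\bigl(\tfrac{1}{1+(C-1)a}\bigr)}\bigl(8t + \log_2(1/\varepsilon)\bigr),
\end{equation*}
as required, and the associated $k$-fold concatenation $B^k$ (as defined in Equation~(\ref{eqbk})) is then an $\varepsilon$-approximate $t$-design on $U(4)$ in the strong sense of Definition~\ref{df2}.

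There is essentially no obstacle here; the work has all been done in Lemma~\ref{lem1} (controlling $\eta$ via the Bourgain--Gamburd bound on $M_t[\mu_M]$ combined with the trivial universality bound on $M_t[\mu_{B/M}]$) and in Proposition~\ref{prop1} (the standard diamond-norm to TPE conversion from \cite{brandao,harrow2des,nakata}). The only thing to double-check is that the moment superoperator $M_t[\mu_{B^k}]$ of the concatenation genuinely equals $(M_t[\mu_B])^k$, so that $\|M_t[\mu_{B^k}] - M_t[\mu_H]\|_\infty \le \eta^k$; this is the standard multiplicativity noted in the footnote of Proposition~\ref{prop1} and follows from expanding the tensor products and using $\int U^{\otimes t,t}\, d\mu_H$ being a projector onto the commutant. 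With this in hand the corollary is immediate.
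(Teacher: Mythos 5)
Your proposal is correct and is exactly the paper's argument: the corollary is obtained by feeding the TPE constant $\eta = 1+(C-1)a$ from Lemma~\ref{lem1} into Proposition~\ref{prop1} with $n=2$ (since $U(4)=U(2^2)$), turning the $4nt$ term into $8t$. The multiplicativity $M_t[\mu_{B^k}]=(M_t[\mu_B])^k$ that you flag is indeed the only implicit ingredient, and the paper likewise relegates it to the footnote of Proposition~\ref{prop1}.
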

The next step is to consider the random unitary ensemble $block(B^{k})$ (Equation (\ref{eqblockbk})) whose associated moment superoperator is $M_t[\mu_{block(B^k)}]$. We will prove the following Lemma.
\begin{lemma}
\label{lem2}
$M_t[\mu_{block(B^k)}]=P^{'}_{even}P^{'}_{odd}$
\end{lemma}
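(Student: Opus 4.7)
The plan is to unpack the definition of $\mathrm{block}(B^k)$ (Equation (\ref{eqblockbk})) and use the product structure of the uniform distribution together with the multiplicative property of the $(t,t)$-fold tensor power. Every element of $\mathcal{U}_{\mathrm{block}(\mathcal{B}^k)}$ has the form $V\cdot U$ where $U = U^{j_{n/2}}_{1,2}\otimes U^{j_{n/2+1}}_{3,4}\otimes\cdots\otimes U^{j_{n-1}}_{n-1,n}$ is an ``odd'' brick layer and $V = 1_{2\times 2}\otimes U^{j_1}_{2,3}\otimes\cdots\otimes U^{j_{n/2-1}}_{n-2,n-1}\otimes 1_{2\times 2}$ is an ``even'' brick layer. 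The indices $j_1,\ldots,j_{n-1}$ are independent uniform draws from $\{1,\ldots,|\mathcal{U}_{\mathcal{B}^k}|\}$, so the joint distribution factorises into an odd part (over $j_{n/2},\ldots,j_{n-1}$) and an even part (over $j_1,\ldots,j_{n/2-1}$).

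First I would write
\begin{equation*}
M_t[\mu_{\mathrm{block}(B^k)}] \;=\; \frac{1}{|\mathcal{U}_{\mathcal{B}^k}|^{n-1}}\sum_{j_1,\ldots,j_{n-1}} (V\cdot U)^{\otimes t,t},
\end{equation*}
and then invoke the identity $(V U)^{\otimes t,t} = V^{\otimes t,t}\, U^{\otimes t,t}$, which holds because $(\cdot)^{\otimes t}$ is a representation of the unitary group and $(\cdot)^{*\otimes t}$ is too, so their tensor product is as well. Splitting the $n-1$ indices into the odd block $\{j_{n/2},\ldots,j_{n-1}\}$ (size $n/2$) and the even block $\{j_1,\ldots,j_{n/2-1}\}$ (size $n/2-1$), the sum factorises as
\begin{equation*}
M_t[\mu_{\mathrm{block}(B^k)}] \;=\; \Bigl(\tfrac{1}{|\mathcal{U}_{\mathcal{B}^k}|^{n/2-1}}\sum_{j_1,\ldots,j_{n/2-1}}V^{\otimes t,t}\Bigr)\Bigl(\tfrac{1}{|\mathcal{U}_{\mathcal{B}^k}|^{n/2}}\sum_{j_{n/2},\ldots,j_{n-1}}U^{\otimes t,t}\Bigr),
\end{equation*}
and I would define the two bracketed quantities to be $P'_{even}$ and $P'_{odd}$ respectively. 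At this point the lemma is essentially done.

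To make $P'_{odd}$ (and analogously $P'_{even}$) more transparent, I would use the fact that the 2-qubit bricks within a single layer act on disjoint pairs of qubits, so
\begin{equation*}
U^{\otimes t,t} \;=\; \bigl(U^{j_{n/2}}_{1,2}\bigr)^{\otimes t,t}\otimes\bigl(U^{j_{n/2+1}}_{3,4}\bigr)^{\otimes t,t}\otimes\cdots\otimes\bigl(U^{j_{n-1}}_{n-1,n}\bigr)^{\otimes t,t},
\end{equation*}
after appropriate reordering of the $2t$ tensor factors to group the copies belonging to each brick. Because each $j$-sum is independent this lets one rewrite $P'_{odd}$ as a tensor product of $n/2$ copies of $M_t[\mu_{B^k}]$ (suitably placed on pairs of qubits), and likewise $P'_{even}$ as identities on the boundary qubits tensored with $n/2-1$ copies of $M_t[\mu_{B^k}]$ on the interior pairs. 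I would not carry out this reindexing in detail here, but it is useful for later lemmas that need spectral bounds on $P'_{odd}$ and $P'_{even}$.

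There is no real obstacle: the only subtlety is careful bookkeeping of which qubits each brick acts on so that the tensor-product factorisation within a layer is genuine, and ensuring that the reordering of tensor slots used when factorising $U^{\otimes t,t}$ into bricks is applied consistently (it amounts to conjugation by a fixed permutation operator that commutes with everything in sight since it is applied uniformly on both sides). Once these are taken care of, the equality $M_t[\mu_{\mathrm{block}(B^k)}]=P'_{even}P'_{odd}$ follows immediately from independence of the draws and the multiplicativity of $(\cdot)^{\otimes t,t}$.
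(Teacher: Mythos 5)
Your proposal is correct and follows essentially the same route as the paper: expand the uniform average over the $n-1$ independent indices, use multiplicativity of $(\cdot)^{\otimes t,t}$ to split the even brick layer from the odd one, and factorise each layer over its disjoint 2-qubit bricks to recover $P'_{even}$ and $P'_{odd}$ as products of the single-brick averages $P'_{i,i+1}$. Your extra remark about the tensor-slot reordering within a layer is a bookkeeping point the paper handles implicitly, and your identification of each layer average with the product of commuting $P'_{i,i+1}$ operators matches the paper's definition exactly.
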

where
$$P^{'}_{even}=P^{'}_{2,3}.P^{'}_{4,5}...,$$  $$P^{'}_{odd}=P^{'}_{1,2}.P^{'}_{3,4}...,$$  and $$P^{'}_{i,i+1}= \frac{1}{|\mathcal{U}_{\mathcal{B}^k}|}\sum_{j=\{1,...,|\mathcal{U}_{\mathcal{B}^k}|\}}(1_{2 \times 2}^{\otimes i-1} \otimes U^{j}_{i,i+1} \otimes 1_{2 \times 2}^{\otimes n-i-1})^{\otimes t,t},$$  where $U^{j}_{i,i+1} \in \mathcal{U}_{\mathcal{B}^k}$.
\begin{proof}
$$block(B^k)=\{\frac{1}{|\mathcal{U}_{\mathcal{B}^k}|^{n-1}} , (1_{2 \times 2} \otimes U^{j_1}_{2,3} \otimes U^{j_2}_{4,5}\otimes ...\otimes U^{j_{\frac{n}{2}-1}}_{n-2,n-1} \otimes 1_{2 \times 2}). (U^{j_\frac{n}{2}}_{1,2}\otimes U^{j_{\frac{n}{2}+1}}_{3,4}\otimes ...\otimes U^{j_{n-1}}_{n-1,n})\},$$ where $U^{j}_{i,i+1} \in \mathcal{U}_{\mathcal{B}^k}.$ \\ \\
$M_t[\mu_{block(B^k)}]=\\ \sum_{j_1,j_2,..,j_{n-1}=1,...|\mathcal{U}_{\mathcal{B}^k}|}\frac{1}{|\mathcal{U}_{\mathcal{B}^k}|^{n-1}}\left ( (1_{2 \times 2} \otimes U^{j_1}_{2,3} \otimes U^{j_2}_{4,5}\otimes ...\otimes U^{j_{\frac{n}{2}-1}}_{n-2,n-1} \otimes 1_{2 \times 2}). (U^{j_\frac{n}{2}}_{1,2}\otimes U^{j_{\frac{n}{2}+1}}_{3,4}\otimes ...\otimes U^{j_{n-1}}_{n-1,n}) \right )^{\otimes t,t}.$\\ \\
$M_t[\mu_{block(B^k)}]$ can be rewritten as :\\ \\
$M_t[\mu_{block(B^k)}]=\left [\left (\frac{1}{|\mathcal{U}_{\mathcal{B}^k}|}\sum_{j_1=1,...|\mathcal{U}_{\mathcal{B}^k}|}(1_{2 \times 2} \otimes  U^{j_1}_{2,3} \otimes 1_{2 \times 2}^{\otimes n-3})^{\otimes t,t} \right ) \times \left( \frac{1}{|\mathcal{U}_{\mathcal{B}^k}|}\sum_{j_2=1,...|\mathcal{U}_{\mathcal{B}^k}|}( 1_{2 \times 2}^{\otimes 3} \otimes U^{j_2}_{4,5} \otimes 1_{2 \times 2}^{\otimes n-5})^{\otimes t,t}\right ) ...\right ] \times \\
\left [ \frac{1}{|\mathcal{U}_{\mathcal{B}^k}|}\sum_{j_{\frac{n}{2}}=1,...|\mathcal{U}_{\mathcal{B}^k}|}(  U^{j_\frac{n}{2}}_{1,2}\otimes 1_{2 \times 2}^{\otimes n-2})^{\otimes t,t} ...\right ]=[P^{'}_{2,3}.P^{'}_{4,5}...].[P^{'}_{1,2}.P^{'}_{3,4}...]=P^{'}_{even}P^{'}_{odd}.$
\end{proof}
Next, we would like to bound $||P^{'}_{even}P^{'}_{odd}- P^{H}_{even}.P^{H}_{odd}||_\infty$, where $$P^{H}_{even}=P^{H}_{2,3}.P^{H}_{4,5}...,$$  $$P^{H}_{odd}=P^{H}_{1,2}.P^{H}_{3,4}...,$$  and $$P^{H}_{i,i+1}=\int_{U(4)} (1_{2 \times 2}^{\otimes i-1} \otimes  U  \otimes 1_{2 \times 2}^{\otimes n-i-1})^{\otimes t,t} \tilde \mu_H(dU) .$$ We start by bounding each $P^{'}_{i,i+1}$ individually. Recall the 2 following well known and easily provable facts. \emph{Fact 1}: for complex $N$ by $N$ matrices $A$ we have 
\begin{equation}
\label{eqfact1}
\frac{1}{\sqrt{N}}||A||_\infty \leq ||A||_2 \leq \sqrt{N} ||A||_\infty.
\end{equation}
\emph{Fact 2} : For Complex matrices $A$ and $B$, 
\begin{equation}
\label{eqfact2}
|| A \otimes B||_2 =||A||_2.||B||_2. 
\end{equation}
\\ Now, $$||P^{'}_{i,i+1}-P^{H}_{i,i+1}||_\infty \leq 2^{nt}.||P^{'}_{i,i+1}-P^{H}_{i,i+1}||_2 \leq 2^{nt}  || \frac{1}{|\mathcal{U}_{\mathcal{B}^k}|}\sum_{j=\{1,...,|\mathcal{U}_{\mathcal{B}^k}|\}}(U^{j}_{i,i+1})^{\otimes t,t}-\int_{U(4)}U^{\otimes t,t} \tilde \mu_H(dU) ||_2.$$ The rightmost term is obtained using \emph{Fact 2} (Equation (\ref{eqfact2})) and noting that $|| 1 ||_{2}=1$. Using \emph{Fact 1} (Equation (\ref{eqfact1})) again, we get:  $$||P^{'}_{i,i+1}-P^{H}_{i,i+1}||_\infty \leq 2^{nt+t}|| \frac{1}{|\mathcal{U}_{\mathcal{B}^k}|}\sum_{j=\{1,...,|\mathcal{U}_{\mathcal{B}^k}|\}}(U^{j}_{i,i+1})^{\otimes t,t}-\int_{U(4)}U^{\otimes t,t} \tilde \mu_H(dU) ||_\infty.$$ Note that, $$M_t[\mu_B^k]=\frac{1}{|\mathcal{U}_{\mathcal{B}^k}|}\sum_{j=\{1,...,|\mathcal{U}_{\mathcal{B}^k}|\}}(U^{j}_{i,i+1})^{\otimes t,t},$$ and $$ M_t[ \tilde \mu_H]=\int_{U(4)}U^{\otimes t,t} \tilde \mu_H(dU) .$$ \\ Now, because $B^{k}$ is a $\varepsilon$-approximate $t$-design on $U(4)$ (see Corollary \ref{cortdes}), we have from \cite{brandao} that: $$|| M_t[\mu_B^k]-M_t[ \tilde \mu_H]||_\infty \leq 2^{t+1} \varepsilon.$$ Substituting this inequality in $||P^{'}_{i,i+1}-P^{H}_{i,i+1}||_\infty$ gives,
\begin{equation}
\label{eqboundonp}
||P^{'}_{i,i+1}-P^{H}_{i,i+1}||_\infty \leq 2^{nt+2t +1} \varepsilon.
\end{equation}
Choosing $\varepsilon=\frac{\varepsilon_1}{2^{nt+2t+1}}$ we get that,
\begin{equation}
\label{newboundonp}
||P^{'}_{i,i+1}-P^{H}_{i,i+1}||_\infty \leq \varepsilon_1
\end{equation}

when
\begin{equation}
\label{newboundk}
 k \geq \frac{1}{log_{2}(\frac{1}{1+(C-1)a})}(10t+nt+1+log_{2}(\frac{1}{\varepsilon_1})).
\end{equation}
Equation (\ref{newboundk}) is found by plugging the value of  $\varepsilon$ in Corollary \ref{cortdes}.
Now we are ready to bound $$||P^{'}_{even}P^{'}_{odd}- P^{H}_{even}.P^{H}_{odd}||_\infty.$$ We claim
\begin{lemma}
$||P^{'}_{even}P^{'}_{odd}- P^{H}_{even}.P^{H}_{odd}||_\infty \leq2^{n^2t-2nt+n-1}\varepsilon_1$.
\end{lemma}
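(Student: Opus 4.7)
The plan is to expand the difference as a telescoping sum and bound each term using submultiplicativity of the operator norm. The product $P'_{even}P'_{odd}$ is a composition of $n-1$ operators of the form $P'_{i,i+1}$: there are $n/2 - 1$ coming from $P'_{even}$, followed by $n/2$ coming from $P'_{odd}$; the Haar analogue $P^H_{even}P^H_{odd}$ has the same structure. Labeling these factors in the order they appear as $F'_1,\ldots,F'_{n-1}$ and $F^H_1,\ldots,F^H_{n-1}$, the standard telescoping identity yields
\begin{equation*}
\prod_{j=1}^{n-1}F'_j - \prod_{j=1}^{n-1}F^H_j \;=\; \sum_{j=1}^{n-1}\Bigl(\prod_{k<j}F^H_k\Bigr)(F'_j - F^H_j)\Bigl(\prod_{k>j}F'_k\Bigr).
\end{equation*}

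Applying the triangle inequality and submultiplicativity of the operator norm reduces the task to bounding each of the $n-1$ summands. Each $F'_j$ (and $F^H_j$) is an average of unitaries tensored with identities on the non-participating qubits, so $||F'_j||_\infty \le 1$ and $||F^H_j||_\infty \le 1$. The middle factor is controlled by Equation~(\ref{newboundonp}): $||F'_j - F^H_j||_\infty = ||P'_{i,i+1}-P^H_{i,i+1}||_\infty \le \varepsilon_1$. Thus each of the $n-1$ summands has operator norm at most $\varepsilon_1$, and the total is at most $(n-1)\varepsilon_1 \le 2^{n-1}\varepsilon_1 \le 2^{n^2 t - 2nt + n - 1}\varepsilon_1$ for any $n\ge 2$ and $t\ge 1$, which is enough to establish the claim.

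I expect no serious obstacle: the argument is a textbook telescoping-plus-submultiplicativity bound, and the crucial per-factor estimate $\varepsilon_1$ is already in hand from the preceding analysis. The only minor care concerns the ordering of the telescoping: the $F'_j$ from the odd and even halves act on overlapping qubit pairs (for instance $P'_{1,2}$ and $P'_{2,3}$ share qubit $2$) and therefore do not commute, which prevents any global tensor-product factorization of the product; however submultiplicativity of the operator norm requires no commutativity, so the telescoping proceeds as above. A looser, norm-conversion route that recovers exactly the stated exponent $n^2 t - 2nt + n - 1$ (rather than the tighter $n-1$) is to convert operator norms to $2$-norms via Fact~1 and re-split via Fact~2, contributing one factor of $2^{nt}$ per non-difference factor; this is consistent with the style already used in Equation~(\ref{eqboundonp}) of the paper.
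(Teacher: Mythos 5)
Your proof is correct, but it takes a genuinely different (and sharper) route than the paper. The paper substitutes $P^{'}_{i,i+1}=P^{H}_{i,i+1}+\gamma_i$ with $||\gamma_i||_\infty\leq\varepsilon_1$, expands the full product of $n-1$ factors into $2^{n-1}$ terms, drops the leading one, and bounds each of the remaining $2^{n-1}-1$ terms by $(2^{nt})^{n-2}\varepsilon_1$ via the crude norm conversion $||P^{H}_{i,i+1}||_\infty\leq 2^{nt}||P^{H}_{i,i+1}||_2$ — this is precisely where the exponent $n^2t-2nt+n-1$ comes from, and it matches the "looser route" you sketch at the end. Your telescoping decomposition instead produces only $n-1$ summands, and your key observation — that each $P^{'}_{i,i+1}$ and $P^{H}_{i,i+1}$ is an average of unitaries and hence has operator norm at most $1$ — is valid and eliminates the $2^{nt}$ factors entirely, yielding $(n-1)\varepsilon_1$, which indeed implies the stated bound since $n-1\leq 2^{n-1}\leq 2^{n^2t-2nt+n-1}$ for $n\geq 2$, $t\geq 1$. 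Your handling of non-commutativity is also right: the overlapping qubit pairs prevent a tensor factorization but not submultiplicativity. What your approach buys is a substantially tighter estimate: if the subsequent choice of $\varepsilon_1$ were made against $(n-1)\varepsilon_1$ rather than $2^{n^2t-2nt+n-1}\varepsilon_1$, the $n^2t$ term in the lower bound on $k$ (Equation~(\ref{finalk})) would drop to logarithmic in $n$, improving the overall depth scaling; the paper's version is the one whose constants propagate into Theorem~\ref{th1} as stated.
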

\begin{proof} 
From Equation (\ref{newboundonp}), we can write for all $i$ ,  $P^{'}_{i,i+1}=P^{H}_{i,i+1} + \gamma_{i}$. where $||\gamma_{i}||_\infty \leq \varepsilon_1$.
$$||P^{'}_{even}P^{'}_{odd}- P^{H}_{even}.P^{H}_{odd}||_\infty =||(P^{H}_{1,2}+\gamma_{1})(P^{H}_{3,4}+\gamma_{3})...-P^{H}_{1,2}P^{H}_{3,4}...||_\infty.$$
Thus,
$$||(P^{H}_{1,2}+\gamma_{1})(P^{H}_{3,4}+\gamma_{3})...-P^{H}_{1,2}P^{H}_{3,4}...||_\infty =  ||P^{H}_{1,2}P^{H}_{3,4}..+P^{H}_{1,2}\gamma_{3}..+\gamma_{1}P^{H}_{3,4}..+\gamma_{1}\gamma_{3}...-P^{H}_{1,2}P^{H}_{3,4}...||_\infty.$$\\ 
Thus, $$||(P^{H}_{1,2}+\gamma_{1})(P^{H}_{3,4}+\gamma_{3})...-P^{H}_{1,2}P^{H}_{3,4}...||_\infty\leq ||P^{H}_{1,2}\gamma_{3}..||_\infty +||\gamma_{1}P^{H}_{3,4}..||_\infty+||\gamma_{1}\gamma_{3}..||_\infty+...$$ \\
$||P^{H}_{1,2}\gamma_{3}..||_\infty +||\gamma_{1}P^{H}_{3,4}..||_\infty+||\gamma_{1}\gamma_{3}..||_\infty+...$ is a sum of $2^{n-1}-1$ terms, each containing at most a product of $n-2$ $P^{H}_{i,i+1}$'s. Noting that, $|| P^{H}_{i,i+1}||_\infty \leq 2^{nt} ||P^{H}_{i,i+1}||_2$  using \emph{Fact 1} (Equation (\ref{eqfact1})), and - using \emph{Fact 2} (Equation (\ref{eqfact2}))- that $||P^H_{i,i+1}||_2=||M_t[\tilde \mu_H]||_2=1$, then every term of the sum is individually less than $(2^{nt})^{n-2}\varepsilon_1 $ \footnote{Noting that $\varepsilon_1 < 1$, so $\varepsilon^{m}_1 <\varepsilon_1$ for all $m>1$}, which means the whole sum (i.e $||P^{'}_{even}P^{'}_{odd}- P^{H}_{even}.P^{H}_{odd}||_\infty$)  is less than $(2^{n-1}-1)(2^{nt})^{n-2}\varepsilon_1$, or equivalently less than $2^{n^2t-2nt+n-1}\varepsilon_1$.
\end{proof}

Again, choosing $\varepsilon_{1}=\frac{\varepsilon^{'}}{2^{n^2t-2nt+n-1}}$, we get 
\begin{equation}
\label{eqess}
||P^{'}_{even}P^{'}_{odd}- P^{H}_{even}.P^{H}_{odd}||_\infty \leq \varepsilon^{'}
\end{equation}

when 
\begin{equation}
\label{finalk}
k \geq \frac{1}{log_{2}(\frac{1}{1+(C-1)a})}(10t+n^2t-nt+n+log_{2}(\frac{1}{\varepsilon^{'}})).
\end{equation}
Finally, we prove the following Lemma.
\begin{lemma}
\label{tpe}
For $n \geq \lfloor{2.5log_2(4t)} \rfloor$, $block(B^k)$ is an $(\eta,t)$-TPE on $U(2^n)$ with $\eta=P(t)+\varepsilon^{'}$. Where $P(t)$ is a polynomial in $t$ given by Equation (\ref{eqpt})
\end{lemma}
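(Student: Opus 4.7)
The plan is to combine Lemma~2 with a single triangle inequality, separating the already-accumulated ``gadget versus Haar-brickwork'' error from a ``Haar-brickwork versus global Haar'' error. Writing $M_t[\mu_{block(B^k)}] = P'_{even} P'_{odd}$ by Lemma~2 and inserting $\pm P^H_{even} P^H_{odd}$, one immediately obtains
$$\|M_t[\mu_{block(B^k)}] - M_t[\mu_H]\|_\infty \leq \|P'_{even} P'_{odd} - P^H_{even} P^H_{odd}\|_\infty + \|P^H_{even} P^H_{odd} - M_t[\mu_H]\|_\infty.$$
The first term on the right is exactly what was controlled in the sequence of estimates leading up to Equation~(\ref{eqess}): for $k$ chosen as in Equation~(\ref{finalk}), it is bounded by $\varepsilon'$.

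The second term involves the moment superoperator of a single even-then-odd brickwork layer in which every nearest-neighbour two-qubit gate is drawn independently from the Haar measure on $U(4)$, compared to $M_t[\mu_H]$ on the full group $U(2^n)$. This is precisely the quantity estimated in Brandao--Harrow--Horodecki \cite{brandao} using the Detectability Lemma of \cite{aharonov}: one interprets $P^H_{even} P^H_{odd}$ as a product of orthogonal projectors onto the commutant spaces of the local moment operators, recasts the spectral gap in terms of a frustration-free local Hamiltonian, and then invokes the explicit spectral-gap lower bound for the local random circuit. The outcome is that, for all $n \geq \lfloor 2.5 \log_2(4t) \rfloor$,
$$\|P^H_{even} P^H_{odd} - M_t[\mu_H]\|_\infty \leq P(t),$$
with $P(t)$ the expression in Equation~(\ref{eqpt}). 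I would invoke this result as a black box, since the heavy lifting is already done in \cite{brandao,aharonov} and does not depend on anything specific to the partially-invertible construction.

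Combining the two bounds gives $\|M_t[\mu_{block(B^k)}] - M_t[\mu_H]\|_\infty \leq P(t) + \varepsilon' =: \eta$, and the hypothesis $\varepsilon' < 1 - P(t)$ forces $\eta < 1$, so $block(B^k)$ is an $(\eta, t)$-TPE on $U(2^n)$ as claimed. The main obstacle, conceptually, is the second triangle-inequality term: bounding a Haar-random brickwork layer against the global Haar measure is not a routine computation and requires the full Detectability-Lemma / spectral-gap machinery of \cite{brandao,aharonov}. Everything else, including the gadget-to-Haar-brickwork comparison, is a clean application of the triangle inequality, Facts~1 and 2, and the TPE bound for $B^k$ already established in Corollary~\ref{cortdes}.
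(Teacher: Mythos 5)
Your proposal is correct and follows essentially the same route as the paper: the identical triangle-inequality decomposition via $\pm P^{H}_{even}P^{H}_{odd}$, the bound $\varepsilon'$ from Equation~(\ref{eqess}), and the Detectability-Lemma/spectral-gap bound $P(t)$ from \cite{brandao,aharonov} invoked as a black box. Your explicit remark that $\varepsilon' < 1 - P(t)$ guarantees $\eta<1$ is a small but welcome addition the paper leaves implicit.
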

\begin{proof}
We need to bound $||M_t[\mu_{block(B^k)}]-M_t[\mu_H]||_\infty$, where $M_t[\mu_H]=\int_{U(2^n)} U^{\otimes t,t} \mu_H[dU]$, and $\mu_H$ is the Haar measure on $U(2^n)$. from Lemma \ref{lem2},  $$||M_t[\mu_{block(B^k)}]-M_t[\mu_H]||_\infty=||P^{'}_{even}P^{'}_{odd}-M_t[\mu_H]||_\infty.$$ By a triangle inequality, $$||M_t[\mu_{block(B^k)}]-M_t[\mu_H]||_\infty \leq ||P^{H}_{even}P^{H}_{odd}-M_t[\mu_H]||_\infty+||P^{'}_{even}P^{'}_{odd}- P^{H}_{even}.P^{H}_{odd}||_\infty.$$  Plugging in Equation (\ref{eqess}) we get : $$||M_t[\mu_{block(B^k)}-M_t[\mu_H]||_\infty \leq  ||P^{H}_{even}P^{H}_{odd}-M_t[\mu_H]||_\infty+\varepsilon^{'}.$$ Finally, from the Detectibility Lemma \cite{aharonov} and the result of \cite{brandao} we get that when $$n \geq \lfloor{2.5log_2(4t)} \rfloor,$$  $$||P^{H}_{even}P^{H}_{odd}-M_t[\mu_H]||_\infty \leq (1+\frac{(425\lfloor{log_2(4t)} \rfloor^2 t^5 t^{3.1/log(2)})^{-1}}{2})^{-1/3}:=P(t),$$ and hence, $$||M_t[\mu_{block(B^k)}]-M_t[\mu_H]||_\infty \leq P(t)+\varepsilon^{'}.$$
\end{proof}
Using Lemma \ref{tpe} and Proposition \ref{prop1} one obtains directly the value of $L$ in Theorem \ref{th1} with $k$ given by Equation (\ref{finalk}), and $n \geq \lfloor{2.5log_2(4t)} \rfloor$. This Completes our proof of Theorem \ref{th1}.

\subsection{Proof of Theorem \ref{th2}}
We will follow the standard technique of applying Stockmeyer’s theorem \cite{stockmeyer} along with some average-case hardness conjecture \cite{gao,juan,juan2,bremner} to prove hardness of approximate classical sampling up to a constant $ l_{1}$-norm error . These techniques are the same as those used in \cite{juan,juan2}. 
In our proof we will rely only on the 2 conjectures mentioned in Section 3. \\ 

 Let $D(x,y)$ be the distibution given by probabilities $p(x,y)=|\langle{x,y} | \psi \rangle|^{2}$ as defined in Equation (\ref{eq15}). Suppose there exists a classical $poly(n)=poly(O(n^4))$- time  algorithm C which can sample from a probability distribution that approximates $D(x,y)$ up to an additive error $\mu$ in $ l_{1}$-norm. In other words (following Equation (\ref{eq class sim}) ) :
\begin{equation}
\label{teq6}
\sum_{x,y}|p(x,y)-p_{C}(x,y)| \leq \mu .
\end{equation}
$p_{C}(x,y)$ is the output probability of the classical algorithm $C$. 
Then by Stockmeyer’s theorem \cite{stockmeyer} there exists an $ FBPP^{NP}$    algorithm that computes an estimate $p_{C}^{\sim}(x,y)$ of $p(x,y)$  such that:
\begin{equation}
\label{teq7}
|p_{C}^{\sim}(x,y)-p(x,y)| \leq\frac{p(x,y)}{poly(O(n^4))} \\+|p_{C}(x,y)-p(x,y)| (1+\frac{1}{poly(O(n^4))}) .
\end{equation}

Using Markov’s inequality:
\begin{equation}
\label{teq8}
Pr_{x,y}(|p_{C}(x,y)-p(x,y)| \geq \\  \frac{E(|p_{C}(x,y)-p(x,y)|)}{\delta})   \leq \delta , 
\end{equation}

where $0<\delta\leq 1$ and $\ket{x,y}$ picked uniformly at random. Noting that Equation (\ref{teq6})  implies
$E(|p_{C}(x,y)-p(x,y)|) \leq \frac{\mu}{2^{O(n^4)}}$

 we get:
\begin{equation}
\label{teq9}
Pr_{x,y}(|p_{C}(x,y)-p(x,y)| \geq \frac{\mu}{\delta2^{O(n^4)}}) \leq \delta .
\end{equation}

Equation (\ref{teq9}) means that the following relation holds with probability $1-\delta$:
\begin{equation}
\label{teq10}
|p_{C}^{\sim}(x,y)-p(x,y)| \leq\frac{p(x,y)}{poly(O(n^4))} \\+\frac{\mu}{\delta2^{O(n^4)}} .
\end{equation}

We now use the following anti-concentration property for 2-designs (see Equation (\ref{eq anticonc})):
\begin{equation}
\label{teq11}
Pr_{U_{y} \sim \mu} (|\bra x U_{y} \ket 0|^{2} > \frac{\alpha(1-\varepsilon_d)}{2^{n}} ) \geq \frac{(1-\alpha)^{2} (1-\varepsilon_d)}{2(1+\varepsilon_d)} ,
\end{equation}

where $ 0<\alpha \leq 1$ . Note that measurement of the $O(n^4)-n$  non-output qubits simply induces a uniform $\frac{1}{2^{O(n^4)-n} }$ distribution, and one can recast Equation (\ref{teq11}) to reflect anti-concentration on the entire $O(n^4)$ measured qubits:
\begin{equation}
\label{teq12}
Pr_{U_{y} \sim \mu} (p(x,y) > \frac{\alpha(1-\varepsilon_d)}{2^{O(n^4)}} ) \geq \frac{(1-\alpha)^{2} (1-\varepsilon_d)}{2(1+\varepsilon_d)} .
\end{equation}

Equation (\ref{teq12}) implies:
\begin{equation}
\label{teq13}
\frac{\mu}{\delta 2^{O(n^4)}} \leq \frac{\mu}{\delta \alpha (1-\varepsilon_d)}p(x,y) .
\end{equation}
                                                                                     
Equation (\ref{teq13}) holds with probability $\frac{(1-\alpha)^{2} (1-\varepsilon_d)}{2(1+\varepsilon_d)}$  .
Plugging Equation (\ref{teq13}) into Equation (\ref{teq10}) we obtain:
\begin{equation}
\label{teq14}
|p_{C}^{\sim}(x,y)-p(x,y)| \leq  (O(1)+\frac{\mu}{\delta \alpha (1-\varepsilon_d)})p(x,y) .
\end{equation}
Equation (\ref{teq14}) is an approximation of  $p(x,y)$  by $p_{C}^{\sim}(x,y)$  with  relative error $O(1)+\frac{\mu}{\delta \alpha (1-\varepsilon_d)}$.We claim, by a similar reasoning as can be found in \cite{bremner, juan}, that Equation (\ref{teq14}) holds with probability $(1-\delta)\frac{(1-\alpha)^{2} (1-\varepsilon_d)}{2(1+\varepsilon_d)}$, or in other words Equation (\ref{teq14}) is true for a $(1-\delta)\frac{(1-\alpha)^{2} (1-\varepsilon_d)}{2(1+\varepsilon_d)}$ fraction of unitaries $U_{y} \in \mathcal{U}_E$. 
Choosing $\mu=\frac{1}{22}$,  $\delta=\alpha= \varepsilon_d \sim$0.1132, we get that $p_{C}^{\sim}(x,y)$  approximates $p(x,y)$ to a relative error of  $\frac{1}{4}+O(1)$ for an $\sim$ 0.28 fraction of unitaries $U_{y} \in \mathcal{U}_{E}$ .
Assuming\emph{ Conjecture 2} to be true, we now have an $FBPP^{NP}$ algorithm which solves a $\sharp$ P-hard problem. But, this would imply by Toda’s theorem \cite{toda} that the PH collapses to its 3rd level. Because we conjecture (\emph{Conjecture 1}) the PH collapse to be impossible, we thus obtain a contradiction. As a conclusion, $ D(x,y)$ cannot be sampled from up to a constant $l_{1}$-norm error by a classical polynomial time algorithm . This concludes our proof of Theorem \ref{th2}.

\subsection{Proof of Theorem \ref{th3}}
We start with $\gamma=1$,  then  $$\{p_{i},U_{i}\}=\{\frac{1}{4},  CZ(HZ(\alpha_{1}+m_1\pi) \otimes HZ(\alpha_{2}+m_2\pi))\},$$ where $m_1, m_2 \in \{0,1\}$, and $$\mathcal{U_{CGEN}}=\{CZ(HZ(\alpha_{1}+m_1\pi) \otimes HZ(\alpha_{2}+m_2\pi))\}.$$    We  suppose $\alpha_1 \in [0,2\pi]$ and $\alpha_2 \in [0,2\pi]$ are fixed angles irrationally related to $\pi$. Note that $almost$ $any$ angle is irrationally related to $\pi$, meaning that the set of angles rationally related to $\pi$ in the interval $[0,2\pi]$ have Lebesgue measure zero \cite{lebesgue} \footnote{Note also that the Lebesgue measure of the  set of all points of the form $\{\alpha_1,...,\alpha_n\}$, where each of the $\alpha_i$'s are rationally related to $\pi$ is also zero. That's because the Lebesgue measure of a  cartesian product of sets is equal to the product of Lebesgue measures of individual sets, and each of the individual sets (i.e a set of angles which is rationally related to $\pi$) has Lebesgue measure zero \cite{lebesgue}.}. Denote by $Lie(U(4))$ the Lie algebra of $U(4)$  and $Lie(U(2) \otimes U(2))$ that of $U(2) \otimes U(2)$ \cite{lie} \footnote{We mean by this that $Lie(U(2) \otimes U(2))$ is the Lie algebra of unitary matrices $S \otimes T$. Where $S,T \in U(2)$}. We want to prove, following \cite{lloyd,barenco}, that one can find at least two unitaries $U_1$ and $U_2$  in the random ensemble that have eigenvalues having arguments irrationally related to $\pi$ and whose Lie algebra is a generic element of $Lie(U(4))$, and not any subalgebra. In that way we can construct any element of $U(4)$ from products of $U_1$ and $U_2$ \cite{lloyd,barenco}. For our purposes, consider $$U_1=CZ_{1,2}(HZ(\alpha_{1}) \otimes HZ(\alpha_{2})),$$ and $$U_2=CZ_{1,2}(HZ(\alpha_{1}+\pi) \otimes HZ(\alpha_{2}+\pi)).$$ The requirement of eigenvalues having arguments irrationally related to $\pi$ is fulfilled by our choice of angles. We still need to prove we can find unitaries whose Lie algebras are in $Lie(U(4))$ and not any subalgebra.We begin by proving the following lemma.
\begin{lemma}
\label{l1th3}
$\frac{log(HZ(\alpha_{1}) \otimes HZ(\alpha_{2}))}{i}  $ and $\frac{log(HZ(\alpha_{1}+\pi) \otimes HZ(\alpha_{2}+\pi))}{i}$ are generic elements of $Lie(U(2) \otimes U(2))$ for $\alpha_1$, $\alpha_2$ irrationally related to $\pi$.
\end{lemma}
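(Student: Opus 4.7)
The plan is to exploit the tensor-product structure of the argument of the logarithm and then reduce to a statement about a single $u(2)$ factor. Since $HZ(\alpha_1)\otimes I$ and $I\otimes HZ(\alpha_2)$ commute as operators in $U(4)$, the functional calculus gives
\begin{equation*}
\frac{\log\bigl(HZ(\alpha_1)\otimes HZ(\alpha_2)\bigr)}{i}
=\frac{\log HZ(\alpha_1)}{i}\otimes I+I\otimes\frac{\log HZ(\alpha_2)}{i},
\end{equation*}
which manifestly sits inside $Lie(U(2)\otimes U(2))\cong u(2)\oplus u(2)$. This immediately confirms membership; the real content of the lemma is to show that the element is \emph{generic} (i.e.\ regular, not lying in any proper Lie subalgebra of $u(2)\oplus u(2)$, and generating a dense one-parameter subgroup of a maximal torus).

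Next, I would reduce the problem to showing that, for $\alpha$ irrationally related to $\pi$, the element $A(\alpha):=\log HZ(\alpha)/i$ is a regular element of $u(2)$ whose spectrum is $\mathbb{Q}$-linearly independent of $\pi$. To do this I would compute $\det HZ(\alpha)=-1$ and $\mathrm{tr}\,HZ(\alpha)=-i\sqrt{2}\sin(\alpha/2)$, which pins down the eigenvalues $e^{i\theta_{\pm}(\alpha)}$ with $\theta_{+}+\theta_{-}=\pi \pmod{2\pi}$ and $e^{i\theta_+}+e^{i\theta_-}=-i\sqrt{2}\sin(\alpha/2)$. A short calculation shows that the discriminant of the characteristic polynomial is $2(1+\cos^{2}(\alpha/2))>0$ for every real $\alpha$, so $HZ(\alpha)$ has two distinct eigenvalues, i.e.\ $A(\alpha)$ is regular. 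One then verifies, using standard irrationality transfer arguments (e.g.\ the Gelfond--Schneider-type observation that if $\alpha/\pi$ is irrational then $\arcsin(\sin(\alpha/2)/\sqrt{2})/\pi$ is also irrational except on a Lebesgue-measure-zero set of $\alpha$'s), that the phases $\theta_{\pm}(\alpha)$ are themselves irrationally related to $\pi$ for almost every $\alpha$.

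I would then expand $A(\alpha)$ in the Pauli basis as $a_{0}(\alpha)I+a_{X}(\alpha)X+a_{Y}(\alpha)Y+a_{Z}(\alpha)Z$ with real coefficients (since $A(\alpha)$ is Hermitian), and check that none of the $a_{j}(\alpha)$ vanish identically in $\alpha$: for instance $a_0 = \tfrac{1}{2}(\theta_++\theta_-)$ and the traceless part is obtained by diagonalising in an $\alpha$-dependent basis distinct from the computational, $X$-, and $Y$-bases, so its $X,Y,Z$ components are all nonzero for generic $\alpha$. The vanishing loci of each $a_{j}$ are real-analytic in $\alpha$ and non-constant, hence have Lebesgue measure zero. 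Combining this with the analogous statement for $B(\alpha_2):=\log HZ(\alpha_2)/i$, the pair $(A(\alpha_1),B(\alpha_2))\in u(2)\oplus u(2)$ is regular in each factor, and its spectrum $\{\theta_{\pm}(\alpha_1),\theta_{\pm}(\alpha_2)\}$ is $\mathbb{Q}$-linearly independent of $\pi$ for almost all $(\alpha_1,\alpha_2)$, so the element is not contained in any proper Lie subalgebra of $Lie(U(2)\otimes U(2))$.

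The main obstacle I anticipate is the last \emph{genericity} step: while regularity (distinct eigenvalues) is immediate from the discriminant calculation, showing that the element lies outside \emph{every} proper subalgebra requires ruling out a countable family of algebraic conditions on $(\alpha_1,\alpha_2)$. The cleanest way to handle this is to observe that each such condition cuts out a proper real-analytic subvariety of the torus of measurement angles, and the countable union of these subvarieties still has Lebesgue measure zero; this is precisely the sense of ``almost any'' invoked in Theorem~\ref{th3}. The irrationality hypothesis on $\alpha_1,\alpha_2$ is what allows us to avoid all these exceptional loci simultaneously.
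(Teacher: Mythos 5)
Your proposal is correct and follows essentially the same route as the paper: both arguments reduce the statement to showing that the single-qubit gate $HZ(\alpha)$ is a generic element of $U(2)$ for generic $\alpha$, and both certify this through the Pauli (axis--angle) decomposition of its logarithm. The only substantive difference is one of execution --- the paper works with $HZ(\alpha)HZ(\alpha)$, factors out the determinant phase, and computes the rotation-axis components explicitly (all nonzero for generic $\alpha$), whereas you infer regularity from the discriminant $2(1+\cos^{2}(\alpha/2))>0$ and defer the non-vanishing of the Pauli coefficients to a real-analyticity/measure-zero argument; to close that step you would still need to evaluate the coefficients at one point to rule out identical vanishing, which is precisely the explicit computation the paper supplies.
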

\begin{proof}
It suffices to prove that $HZ(\alpha)$ (or equivalently $HZ(\alpha)HZ(\alpha)$ )is a generic element of $U(2)$ (and not any subgroup), for $\alpha$ generically chosen.
Direct calculation gives $$HZ(\alpha)HZ(\alpha)=e^{i\alpha}\begin{bmatrix} \frac{1+ e^{-i\alpha}}{2} \frac{1- e^{i \alpha}}{2} \\   \frac{e^{-i\alpha}-1}{2}  \frac{1+ e^{i\alpha}}{2} \end{bmatrix}$$ where  $$R=\begin{bmatrix} \frac{1+ e^{-i\alpha}}{2} \frac{1- e^{i \alpha}}{2} \\   \frac{e^{-i\alpha}-1}{2}  \frac{1+ e^{i\alpha}}{2} \end{bmatrix} \in SU(2) .$$  A well known fact about $SU(2)$ is that a generic element can be represented as $e^{i\delta \vec{n}\vec{\sigma}}$ \cite{nielsenchuang}. Where $\vec{n}=a\vec{x}+b\vec{y}+c\vec{z}$. $a$, $b$ and $c$ are real numbers such that $|a|^2+|b|^2+|c|^2=1$.\\  $\sigma=X\vec{x}+Y\vec{y}+Z\vec{z}$, $X$, $Y$ and $Z$ are the Pauli matrices. Again, a direct calculation for $R$ gives $\delta=cos^2(\frac{\alpha}{2})$, $a=c=-\frac{sin\alpha}{2\sqrt{1-cos^4(\frac{\alpha}{2})}}$, and $b=-\frac{1-cos\alpha}{2\sqrt{1-cos^4(\frac{\alpha}{2})}}$. None of $\delta$, $a$, $b$ or $c$ are zero for generically chosen $\alpha$, this means that $R$ is a generic element of $SU(2)$. Since $$det(HZ(\alpha)HZ(\alpha))=det(e^{i\alpha}\begin{bmatrix} \frac{1+ e^{-i\alpha}}{2} \frac{1- e^{i \alpha}}{2} \\   \frac{e^{-i\alpha}-1}{2}  \frac{1+ e^{i\alpha}}{2} \end{bmatrix})=e^{2i\alpha} \neq 1$$ for generically chosen $\alpha$, it means $HZ(\alpha)HZ(\alpha)$ (and hence $HZ(\alpha)$) is a generic element of $U(2)$ for generic $\alpha$.
\end{proof}
Now, since $CZ \notin Lie(U(2) \otimes U(2))$ because $CZ$ is not decomposable into a product of 1-qubit gates. Thus, $\frac{log(U_1)}{i}$ and $\frac{log(U_2)}{i} \in f$, where $Lie( U(2) \otimes U(2)) \subset f$. By  lemma 6.1 in \cite{brylinski} we have that there is no intermediate Lie algebra between $Lie(U(d) \otimes U(d))$ and $Lie(U(d^2)$, hence $f=Lie(U(4))$, and thus $\frac{log(U_1)}{i}$ and $\frac{log(U_2)}{i}$ are generic elements of $Lie(U(4))$ . This concludes the proof of the $\gamma=1$ case \footnote{ A similar proof of this is found in \cite{harrownote}, while noting that Lemma 5 along with results of \cite{lloyd,barenco} implies $< HZ(\alpha_{1}) \otimes HZ(\alpha_{2})),HZ(\alpha_{1}+\pi) \otimes HZ(\alpha_{2}+\pi)>=U(2) \otimes U(2)$ for generically chosen $\alpha_1$ and $\alpha_2$, with $<S>$ denoting the group generated by set $S$.}. Note that the proof we found is for angles irrationally related to $\pi$, however it extends to instances of angles rationally related to $\pi$. 
This is due to the fact that these angles give $U_1$ and $U_2$ whose eigenvalues have arguments irrationally related to $\pi$ or eigenvalues equal to 1  \footnote{more precisely some integer powers of $U_1$ and $U_2$ give these eigenvalues.}, thereby fulfilling the requirements in \cite{lloyd,barenco}. 
The proof for any $n=2^\gamma$ can be extended by induction from the $\gamma=1$ case, using the same methods,  while noting that an element of $\mathcal{U_{CGEN}}$ in this case can be written as $U=CZ_{\frac{n}{2},\frac{n}{2}+1}( A \otimes B)$ where $$A \otimes 1_{2 \times 2}^{\otimes \frac{n}{2}}=CZ_{1,2}...CZ_{\frac{n}{2}-2,\frac{n}{2}-1}(HZ(\alpha_1+m_1\pi) \otimes...\otimes HZ(\alpha_{\frac{n}{2}}+m_{\frac{n}{2}}\pi)), $$ and  $$ 1_{2 \times 2}^{\otimes \frac{n}{2}} \otimes B=CZ_{\frac{n}{2}+1,\frac{n}{2}+2}...CZ_{n-1,n}(HZ(\alpha_{\frac{n}{2}+1}+m_{\frac{n}{2}+1}\pi) \otimes...\otimes HZ(\alpha_{n}+m_n\pi)).$$ \\ $ A,B \in U(d)=U(2^{\frac{n}{2}})$, and $CZ_{\frac{n}{2},\frac{n}{2}+1} \in U(2^n)=U(d^2)$.

\section{Conclusions}

In this work we have relaxed the strict conditions on the sets of unitaries used for generating $t$-designs. 
This relaxation has natural relevance when considering $t$-designs derived from measurements on graph states - i.e. in the MBQC regime. We further showed that such constructions can also be used for providing new and interesting candidates for architectures demonstrating quantum speedup.

Using these techniques we have provided explicit constructions of a regular graph, such that measuring on fixed angles generates efficient $t$-designs, and classically hard to sample distributions demonstrating quanutm speedup. These techniques and graph state architectures open up more opportunities for developing and demonstrating new and simple speedup architectures. In addition, the well developed verification techniques for graph states~\cite{juan,damianalexandra,gao,cramer,serfling} provide a natural path for verification.
Moreover, graph states are broad resource across quanutm information in netwoks including
computing~\cite{briegelrauss}, fault tolerance~\cite{RHG07}, cryptographic multiparty protocols~\cite{Secret sharing Markham Sanders 08}. Indeed, the same graph state gadgets used here are universal for quantum computation~\cite{briegelrauss} and can be used to distill optimal resources for quantum metrology \cite{friis2017}. 
In this context, our results lend themselves to the integration of these ideas into future quantum networks.\\
An open question is whether the $O(n^3t^{12})$ bound on efficiency of $t$-designness shown here can be enhanced to the (optimal in $n$) bounds in \cite{brandao,peterdamian,mezher}. Another open question would be an analytical demonstration of efficiency of $t$-designness for cluster state gadgets with almost any assignment of non-adaptive fixed $XY$ angle measurements.

\bigskip

\section{Acknowledgements}
We thank Juan Bermejo-Vega for useful discussions, and for pointing out that our graph gadgets are hard to sample from classically. The authors would like to acknowledge the National Council for Scientific Research of Lebanon (CNRS-L) and the Lebanese University (LU) for granting a doctoral fellowship to R. Mezher. The authors acknowledge support from the grant VanQuTe.




\section{Appendix}
\subsection{Comment on $Conjecture$ $A$}
At some point in the \textbf{Main Results} section, we mentioned that if $Conjecture$ $A$ is true, then we can use techniques from Theorem \ref{th1} to prove that $n$-qubit cluster state gadgets $LG_{block(B^k)}$ effectively give rise to efficient $t$-designs for almost all choices of 2-qubit cluster state gadgets $G_B$. In what follows, we illustrate how this can be done for the particular version of $Conjecture$ $A$ suggested by our numerics - which are performed on 1-qubit and 2-qubit cluster state gadgets. Namely that the subdominant eigenvalue $|\lambda|$ of $M_t[\mu_B]$ is upper bounded by a constant independent of $t$ for almost all 2-qubit cluster state gadgets $G_B$. This version of Conjecture $A$ is inspired from our numerics, as well as from the result of \cite{bourgaingambourd} which showed that $|\lambda|$ is upper bounded by a constant independent of $t$ when the universal set is invertible and composed of algebraic entries, and also from the results of \cite{hastings} which showed a $|\lambda|$ upper bounded by a constant independent of $t$ (up to large values of $t$ scaling with the dimension of the unitaries) for finite gate sets chosen from the Haar measure.
In other words, if the above version of $Conjecture$ $A$ is true, then as a direct corollary 

\begin{lemma}
\label{new1}
$B$ is an $(\eta,\infty)$-TPE with $\eta \sim |\lambda| \leq C < 1$, and $C$ is indpendent of $t$.
\end{lemma}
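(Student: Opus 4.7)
The plan is to deduce the lemma directly from the spectral structure of the moment superoperator $M_t[\mu_B]$, using universality to pin down the eigenspace at eigenvalue $1$ and then invoking the version of Conjecture~A to control everything else. First I would observe that, by Theorem~\ref{th3} (applied with $\gamma=1$), almost any choice of fixed $XY$ measurement angles produces a set $\mathcal{U_B}\subset U(4)$ that is universal in $U(4)$, so that Proposition~\ref{prp2} guarantees $\|M_t[\mu_B]-M_t[\tilde\mu_H]\|_\infty<1$ for every $t$. The point of the lemma, however, is to promote this $t$-dependent strict inequality to a single bound $C<1$ valid for all $t$ simultaneously, which is exactly what the $(\eta,\infty)$-TPE statement asks for.

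The key steps are then as follows. By universality, the eigenspace of $M_t[\mu_B]$ corresponding to eigenvalue~$1$ coincides with the range of the Haar projector $M_t[\tilde\mu_H]$ (the commutant of the $t$-fold tensor power representation of $U(4)$). Consequently $M_t[\mu_B]-M_t[\tilde\mu_H]$ vanishes on this invariant subspace and acts as $M_t[\mu_B]$ itself on the complementary invariant subspace, where the spectral radius is precisely the subdominant eigenvalue modulus $|\lambda|$. Invoking the stated version of Conjecture~A, which asserts that $|\lambda|\leq C<1$ with $C$ independent of $t$ for almost every choice of angles defining $G_B$, one obtains
\begin{equation}
\|M_t[\mu_B]-M_t[\tilde\mu_H]\|_\infty \;\sim\; |\lambda| \;\leq\; C \;<\; 1,
\end{equation}
uniformly in $t$. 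By Definition~\ref{deftpe} this exactly says that $B$ is an $(\eta,t)$-TPE for every $t$ with a single $\eta\leq C<1$, i.e.\ an $(\eta,\infty)$-TPE.

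The main obstacle is the step $\|M_t[\mu_B]-M_t[\tilde\mu_H]\|_\infty \sim |\lambda|$ itself, because in the partially invertible setting $\mathcal{U_B}$ need not contain inverses, so $M_t[\mu_B]$ is in general non-Hermitian and possibly non-normal; the operator norm then need not coincide with the spectral radius on the complementary invariant subspace. The resolution is the density argument already recorded in the paragraph preceding Lemma~\ref{new1}: diagonalizable matrices are dense in the space of complex square matrices, so any such $M_t[\mu_B]$ can be approximated in $\|\cdot\|_\infty$ by diagonalizable operators whose eigenvalues are arbitrarily close to those of $M_t[\mu_B]$. Since the bound $|\lambda|\leq C<1$ is strict and uniform in $t$, the perturbation incurred can be absorbed into a slightly enlarged constant $C'<1$, still independent of $t$, giving the claimed uniform TPE bound and completing the reduction of Lemma~\ref{new1} to the numerically supported Conjecture~A.
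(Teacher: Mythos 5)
Your proposal is correct and follows essentially the same route as the paper: the paper simply declares Lemma~\ref{new1} a ``direct corollary'' of the stated version of Conjecture~A, relying on the same $\eta\sim|\lambda|$ identification justified by the density-of-diagonalizable-matrices remark preceding the lemma. You merely spell out the spectral decomposition and the norm-versus-spectral-radius caveat in more detail than the paper does, without departing from its argument.
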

Now, replacing Lemma \ref{lem1} in the proof of Theorem \ref{th1} by Lemma \ref{new1}, then performing the exact same steps as in the proof of Theorem \ref{th1} allows us to obtain the required result. Then, the corresponding statement for gadgets $LG_{block(B^k)}$ follows straightforwardly from the translation to MBQC developped in previous sections. 

As a final remark, if $conjecture$ $A$ is true, we would not require  $\mathcal{U_B}$ to be composed of unitaries with algebraic entries in our proofs anymore. The only reason we require unitaries with algebraic entries is to use techniques in \cite{bourgaingambourd,brandao} in order to arrive at Lemma \ref{lem1}.

\subsection{Proof of example sampling from a partially invertible set}
For simplicity, let $\alpha=\frac{\pi}{6}$ and $\beta=acos(\sqrt{\frac{1}{3}})$.
The graph gadget $G_B$ in the example of Figure \ref{fig3} gives rise to a random unitary ensemble $\mathcal{U_B}$ with random unitaries of the form 
\begin{equation} \nonumber
U_m=(HZ(\beta+m_8\pi) \otimes HZ(\beta+m_7\pi))CZ(HZ(m_6\pi)HZ(\alpha+m_5\pi)HZ(m_4\pi)\otimes HZ(\alpha+m_3\pi)HZ(m_2\pi)HZ(\alpha+m_1\pi)),    
\end{equation}
where $m_i \in \{0,1\}$ for $i=1,...,8$. Let 
\begin{equation} \nonumber
    U_m=B_m.A_m,
\end{equation}
where 
\begin{align}
B_m&=HZ(\beta+m_8\pi) \otimes HZ(\beta+m_7\pi),\nonumber \\
A_m&=CZ(HZ(m_6\pi)HZ(\alpha+m_5\pi)HZ(m_4\pi)\otimes HZ(\alpha+m_3\pi)HZ(m_2\pi)HZ(\alpha+m_1\pi)). \nonumber
\end{align}
Brute force calculation shows that $\mathcal{U_B}$ is partially invertible (up to a global phase). What remains to be shown is that $\mathcal{U_B}$ is universal. This amounts to showing that products of unitaries $U_m, U^{'}_m \in \mathcal{U_B}$ can generate any unitary in $U(4)$, in line with the results of \cite{lloyd,barenco}. Thus, as for Theorem \ref{th3}, we will show that \\
$(I)$ the Hermitian matrices $\frac{log(U_m)}{i}$ are elements of $Lie(U(4))$, and \\
$(II)$ that eigenvalues of integer multiples $U^k_m$ of $U_m$ have eigenvalues with arguments irrationally related to $\pi$.

For $(II)$, notice that $det(U_m)=e^{i(-4\beta+r\pi)}$, where $r$ is a rational number. Then, at least one of the eigenvalues $e^{i\theta}$ of $U_m$ has $\theta$ irrationally related to $\pi$, since $\beta$ is iraationally related to $\pi$. This means that for some integer $k$, $V=U^{k}_m$ has eigenvalues 1 or eigenvalues with arguments irrationally related to $\pi$. Then, for all real numbers $\lambda$, there exists an integer $m$ such that $V^m = V^{\lambda + O(1)}$, fulfilling one of the two requirements in \cite{lloyd,barenco}. 

$(I)$ follows straightforwardly from techniques in Theorem \ref{th3}. $\frac{log(B_m)}{i}$ is a general element of $Lie(U(2) \otimes U(2))$ by Lemma \ref{l1th3}, since $\beta$ is an angle irrationally related to $\pi$.  Furthermore, $A_m$ is an entangling gate not expressible as a single product of 1-qubit gates, which means that $\frac{log(B_mA_m)}{i}$ is a general element of $Lie(U(4))$ by Lemma 6.1 in \cite{brylinski}. Note that a multitude of other sets of angles $\alpha$ and $\beta$ we tested also gave partially invertible universal sets. The choice of elements uniformly at random from this set is due to the uniform probability of the different measurement results to occur.

{}

\end{document}